\newcommand{\bmrem}{}
\newtheorem{proposition}{Proposition}
\newtheorem{assumption}{Assumption}
\newtheorem{theorem}{Theorem}
\newcommand{\RV}{\mbox{RV}}
\newcommand{\HD}{\textsc{hd}}
\definecolor{paleBlue}{rgb}{0.88,1,1}
\definecolor{grey}{RGB}{220,220,220}
\begin{document}
\title{Determining the Dependence Structure of Multivariate Extremes}
\author{E. S. Simpson$^{1}$, J. L. Wadsworth$^{2}$, J. A. Tawn$^{2}$\vspace{0.25cm}\\ 
$^1$\emph{STOR-i Centre for Doctoral Training, Lancaster University, Lancaster, LA1 4YR, U.K.}\\
$^2$\emph{Department of Mathematics and Statistics, Lancaster University, Lancaster, LA1 4YF, U.K.}}
\date{}

\maketitle

\begin{abstract}
In multivariate extreme value analysis, the nature of the extremal dependence between variables should be considered when selecting appropriate statistical models. Interest often lies with determining which subsets of variables can take their largest values simultaneously, while the others are of smaller order. Our approach to this problem exploits hidden regular variation properties on a collection of non-standard cones and provides a new set of indices that reveal aspects of the extremal dependence structure not available through existing measures of dependence. We derive theoretical properties of these indices, demonstrate their value through a series of examples, and develop methods of inference that also estimate the proportion of extremal mass associated with each cone. We apply the methods to UK river flows, estimating the probabilities of different subsets of sites being large simultaneously.
\end{abstract}

{\bf Keywords:} asymptotic independence, extremal dependence structure, hidden regular variation, multivariate regular variation.


\section{Introduction}\label{sec:intro}
When constructing models in multivariate extreme value analysis, we often need to exploit extremal dependence features. Consider the random vector $X = (X_{1},\dots,X_{d})$, with $X_{i} \sim F_{i}$, as well as a subset of these variables $X_{C} = \left\{X_{i}:i\in C\right\}$, for some $C\in 2^{D}\setminus\emptyset$, i.e.,\ $C$ lies in the power set of $D=\{1,\dots,d\}$ without the empty set. For any $C$ with $|C|\geq 2$, extremal dependence within $X_{C}$ can be summarized by
\begin{align}
	\chi_{C} = \lim_{u\rightarrow 1}\text{pr}\left\{F_{i}(X_{i})>u:i\in C\right\}/(1-u)
\label{eqn:chi}
\end{align}
if the limit exists. In particular, if $\chi_{C} > 0$, the variables in $X_{C}$ are asymptotically dependent, i.e., can take their largest values simultaneously. If $\chi_{C} = 0$, the variables in $X_{C}$ cannot all take their largest values together, although it is possible that for some $\underline{C}\subset C$, $\chi_{\underline{C}}>0$, see for example \cite{Hua2011} or \cite{Wadsworth2013}.

Many models for multivariate extremes are only applicable when data exhibit either full asymptotic dependence, entailing $\chi_{C}>0$ for all $C\in 2^{D}\setminus\emptyset$ with $|C|\geq 2$, or full asymptotic independence, i.e.,\ $\chi_{i,j}=0$ for all $i<j$ \citep{Heffernan2004}. However, often some $\chi_{C}$ are positive whilst others are zero, i.e.,\ only certain subsets of the variables take their largest values simultaneously, while the other variables are of smaller order. The extremal dependence between variables can thus have a complicated structure, which should be exploited when modelling. In this paper, we present two methods for determining this structure.

The full extremal dependence structure is not completely captured by the $2^{d}-d-1$ coefficients $\{\chi_{C}:C\in 2^{D}\setminus\emptyset,|C|\geq 2\}$ since we do not learn fully whether small values of some variables occur with large values of others, or whether individual variables can be extreme in isolation. This is revealed more clearly by decomposing the vector into radial and angular components, $(R,W)$, and examining their asymptotic structure. If the $X_{i}$ follow a common heavy-tailed marginal distribution, usually achieved via a transformation, these pseudo-polar coordinates are defined as $R=\left\|X\right\|_{1}~~ \text{and} ~~W=X\big/\left\|X\right\|_{2}$, for arbitrary norms $\|\cdot\|_{1}$ and $\|\cdot\|_{2}$. We take both to be the $L_{1}$ norm, and assume that $X$ has standard Fr\'{e}chet margins, so that $\text{pr}(X_{i}<x) = \exp\left(-1/x\right)$ for $x>0$ and $i=1,\dots,d$. As such, the radial and angular components are $R = \sum_{i=1}^{d} X_{i} ~~\text{and}~~ W = X/R$, respectively, with $R>0$ and $W\in \mathcal{S}_{d-1}=\big\{(w_{1},\dots,w_{d})\in[0,1]^{d} : \sum_{i=1}^{d}w_{i}=1\big\}$, the $(d-1)$-dimensional unit simplex. It follows that $\text{pr}(R>r)\sim ar^{-1}$ as $r\rightarrow\infty$, for $a\geq 1$, so all the information about extreme events is contained in $W$, and in particular the distribution of $W$ conditioned on $R>r$ as $r\rightarrow\infty$. Under the assumption of multivariate regular variation (\citeauthor{Resnick2010}, \citeyear{Resnick2010}, Chapter~6),
\begin{align}
	\lim_{t\rightarrow\infty}\text{pr}(R>tr,W\in B \mid R>t) = H(B)r^{-1},~~~\text{$r\geq 1,$}
\label{eqn:spectralMeasure}
\end{align}
for $B$ a measurable subset of $\mathcal{S}_{d-1}$, where the limiting spectral measure $H$ satisfies
\begin{align}
	\int_{\mathcal{S}_{d-1}} w_{i} dH(w) = 1/d, ~~~ i=1,\dots,d.
\label{eqn:momentConstraint}
\end{align}
As the radial component becomes large, the position of mass on $\mathcal{S}_{d-1}$ reveals the extremal dependence structure of $X$. We note the link between the dependence measure $\chi_{C}$ in~\eqref{eqn:chi}, and the spectral measure $H$: if $\chi_{C}>0$, then $H$ places mass on at least one region $\mathcal{S}_{d-1}^{\overline{C}} = \left\{(w_{1},\dots,w_{d})\in[0,1]^{d} : \sum_{i\in \overline{C}}w_{i}=1\right\}$, with $C\subseteq \overline{C}\subseteq D$. This underlines that the term asymptotic dependence is not so useful here, since it offers only partial insight into the structure. In what follows, we thus avoid this term where possible, talking instead about faces of the simplex on which $H$ places mass.

\begin{figure}
\begin{center}
\includegraphics[width=0.5\textwidth]{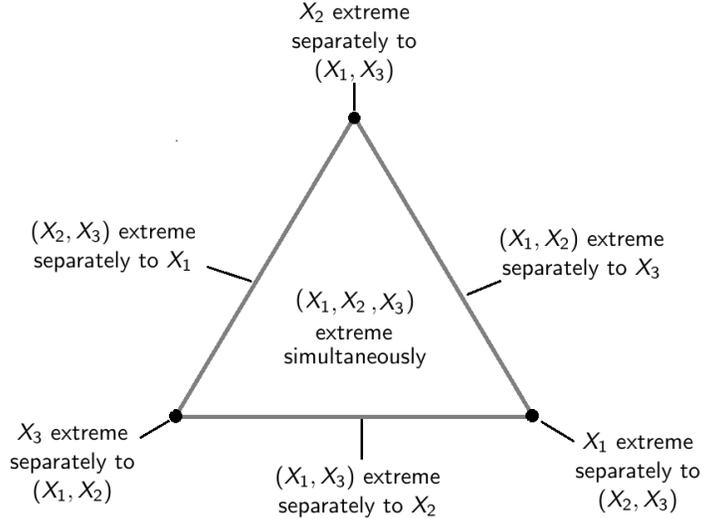}
\caption{The simplex $\mathcal{S}_{2}$. Coordinates are transformed to the equilateral simplex.}
\label{fig:triangleSimplex}
\end{center}
\end{figure}

In the $d$-dimensional case, $\mathcal{S}_{d-1}$ can be partitioned into $2^{d}-1$ faces, each of which could contain mass. Mass on each of these faces corresponds to a different subset of the variables $(X_{1},\dots,X_{d})$ being the only ones taking their largest values concurrently. This is demonstrated in Fig.~\ref{fig:triangleSimplex} for $d=3$. For high, or even moderate, dimensions, there are many faces to consider, and the task of determining which faces truly contain mass, and therefore the extremal dependence structure of the variables, is not straightforward, as for a finite sample with continuous margins, points cannot lie exactly on the boundary of the simplex: no $W_{i}$ equal zero when $R<\infty$. 

The multivariate regular variation assumption~\eqref{eqn:spectralMeasure} can also be phrased in terms of measures on the cone $\mathbb{E}=[0,\infty]^d\setminus\{0\}$, see Section~\ref{subsec:MRV}. Each face of $\mathcal{S}_{d-1}$ can be identified with a sub-cone of $\mathbb{E}$ for which one or more components are identically zero. Intuition and visualization are often simpler with $H$, but in the sequel we work with $\mathbb{E}$ and the sub-cones corresponding to faces of $\mathcal{S}_{d-1}$. Variants of our methods that directly use the radial-angular framework are presented in the 2019 Lancaster University PhD thesis of E.\ S.\ Simpson.

The problem of determining the extremal dependence structure of variables has been recently studied elsewhere in the literature. Under the assumption that the data are from an elliptical copula, \cite{Kluppelberg2015} use factor analysis on extreme correlations linked to the tail dependence function. \cite{Chautru2015} introduces a non-parametric approach based on statistical learning, combining a principal component analysis algorithm with clustering techniques. A Bayesian clustering method is proposed by \cite{Vettori2017}, based on the hierarchical dependence structure of the nested logistic distribution of \cite{Tawn1990}. \citeauthor{Goix2016}\ (\citeyear{Goix2016}, \citeyear{Goix2015}) propose a non-parametric simplex partitioning approach in which they condition on the radial variable being above some high threshold. They assume that there is mass on a particular face if the number of points in the corresponding region of the simplex is sufficiently large, leading to a sparse representation of the dependence structure. \cite{Chiapino2017} propose an algorithm to group together nearby faces with extremal mass into feature clusters, by exploiting their graphical structure and a measure of extremal dependence. Finally, \cite{Chiapino2018} extend this approach by instead using the coefficient of tail dependence of \cite{Ledford1996}.

In this paper we exploit additional, but commonly satisfied, hidden regular variation assumptions on non-standard sub-cones of $\mathbb{E}$, by introducing a new set of parameters that describes the dominant extremal dependence structure. We study properties of these parameters, their link to existing coefficients, and explore their values for a range of examples. Estimation of the parameters provides us with an asymptotically-motivated framework for determining the extremal dependence structure, as well as allowing us to estimate the proportion of mass associated with each set of variables. We propose two such inferential methods, both with computational complexity $O(dn\log n)$ if $d<n$, for $d$ representing the number of variables and $n$ the number of data points. This is the same complexity as the method of \cite{Goix2015}.


\section{Theoretical motivation}\label{sec:theory}
\subsection{Multivariate regular variation}\label{subsec:MRV}
A function $\lambda : (0,\infty] \rightarrow (0,\infty]$ is said to be regularly varying at infinity, with index $\alpha\in\mathbb{R}$, if $\lambda(tx)/\lambda(t) \rightarrow x^{\alpha}$, as $t\rightarrow\infty$, for all $x>0$. For such functions, we write $\lambda \in {\rm{RV}}_{\alpha}$. We can always express $\lambda(x) = L(x)x^{\alpha}$, with $L\in {\rm{RV}}_{0}$ termed a slowly varying function. A cone $G\subset\mathbb{R}^{d}$ is a set such that for any $x\in G$, $tx\in G$ for all $t>0$. The assumption of multivariate regular variation on the cone $G$ means that there exists a scaling function $a(t)\rightarrow\infty$, and a positive measure $\mu$, such that 
\begin{align}
	t~\text{pr}\left(X/a(t)\in \cdot\right) \rightarrow \mu(\cdot),~~~~t\rightarrow\infty,
	\label{eqn:MRV}
\end{align}
with vague convergence in the space of non-negative Radon measures on $G$ (\citeauthor{Resnick2010} \citeyear{Resnick2010}, Chapter 3). If we assume that the margins of $X$ are standard Fr\'{e}chet or Pareto, we may take $a(t)=t$, and the limit measure $\mu$ is homogeneous of order $-1$. For the remainder of this section, we assume that $X$ has standard Fr\'{e}chet marginal distributions.

\subsection{Hidden regular variation}\label{subsec:HRV}
The concept of hidden regular variation was introduced by \cite{Resnick2002}, who formalized and extended the ideas of \citeauthor{Ledford1997} (\citeyear{Ledford1996}, \citeyear{Ledford1997}). Further work has been done by \cite{Maulik2004} and \cite{Mitra2011}, for example, whilst \cite{Resnick2010} provides a textbook treatment. Here, multivariate regular variation is assumed on some cone in $\mathbb{R}^{d}$. If there is also regular variation, but with a scaling function of smaller order, on some sub-cone, we have hidden regular variation on that sub-cone.

To our knowledge, the marginal case of this hidden regular variation framework is the only one previously exploited from a statistical perspective; from a theoretical viewpoint, \cite{Das2013} consider hidden regular variation on a series of non-standard cones, although these are mostly different from the ones we will consider. For $X_{C}=\{X_{i}:i\in C\}$,  $x_C=\{x_i:i\in C\}$, \cite{Ledford1997} considered multivariate regular variation on the cone $\mathbb{E}=[0,\infty]^d\setminus \{0\}$ and hidden regular variation on
\begin{align}
	\mathbb{E}_{C}^* = \left\{ x\in \mathbb{E} : x_{C}\in (0,\infty]^{|C|},~ x_{D\setminus C} \in [0,\infty]^{|D\setminus C|} \right\},
\label{eqn:estar}
\end{align}
with limit measures on $\mathbb{E}^*_C$ homogeneous of order $-1/\eta_C$, and the so-called coefficient of tail dependence $\eta_C$ taking values in $(0,1]$. If $\mu(\mathbb{E}_{C}^*)>0$, variables $X_C$ can take their largest values simultaneously. If we instead consider sub-cones of $\mathbb{E}$ of the form
\begin{align}
	\mathbb{E}_C = \left\{ x\in \mathbb{E} : x_C\in (0,\infty]^{|C|},~ x_{D\setminus C} = \{0\}^{|D\setminus C|}\right\},
	\label{eqn:subcones}
\end{align}
where $\{0\}^m$ denotes an $m$-vector of zeros, then having $\mu(\mathbb{E}_C)>0$ indicates that variables in $X_{C}$ can take their largest values simultaneously while variables in $X_{D\setminus C}$ are of smaller order. Our task is to determine the cones $\mathbb{E}_C$ on which $\mu$ places mass, equivalent to the problem of detecting where $H$ places mass on $\mathcal{S}_{d-1}$, thus revealing the extremal dependence structure of $X$. For simplicity, we assume that if $\mu$ places mass on $\mathbb{E}_C$, then for any measurable $B_C\subset\mathbb{E}_C$, $\mu(B_C)>0$. More generally, it need only be true that there exists $B_C\subset\mathbb{E}_C$ such that $\mu(B_C)>0$, however if the mass lies only in very restricted parts of $\mathbb{E}_C$, then the task of detecting which cones contain mass is naturally more difficult.

For a finite sample, mass will not occur on cones $\mathbb{E}_C$ with $|C| < d$; this is equivalent to all mass for $W$ being placed on the interior of the simplex in Fig.~1. One option is to truncate the variables to zero below some marginal threshold, to ensure mass on at least some of these cones at a finite level. Let us define
\begin{align}
	X^{*} = \begin{cases} 0, & \mbox{$X\leq -1/\log p$}, \\ X, & \mbox{$X> -1/\log p$}, \end{cases}
\label{eqn:truncation}
\end{align}
such that $p$ is the quantile at which we truncate. The variable $X^*$ has the same tail behaviour as $X$, but in general $\text{pr}(X^*/t \in B_C)>0$ for $B_C \subset \mathbb{E}_C$, and in this way we could define a hidden regular variation assumption on $\mathbb{E}_C$. Writing $B_C = \{x \in \mathbb{E}: x_C \in B \subset (0,\infty]^{|C|}, x_{D \setminus C}\in\{0\}^{|D \setminus C|}\}$, then $\text{pr}(X^*/t \in B_C) = \text{pr}(X_C/t \in B, X_{D \setminus C} \in [0, -1/\log p]^{|D \setminus C|})$, such that we consider the behaviour when the variables $X_C$ are growing at a common rate, but variables $X_{D \setminus C}$ have a fixed upper bound. However, the latter condition does not capture all possible behaviour that leads to variables $X_{D \setminus C}$ being of smaller order than $X_C$, and in general a more elaborate assumption is needed. We consider how we can allow $X_{D\setminus C}$ to be bounded above by a function that is growing, but at a potentially slower rate than $t$.

Define the set $(y, \infty]^C \times [0,z]^{D \setminus C} = \{x \in \mathbb{E}: x_i > y, i \in C; x_j \leq z, j \in D \setminus C\}$. Then under the regular variation assumption~\eqref{eqn:MRV}, \begin{align}
t\text{pr}\left\{X/t \in (y,\infty]^{C} \times [0,z]^{D \setminus C}\right\} \to \mu\left((y,\infty]^C \times [0,z]^{D \setminus C}\right) \geq \mu\left((y,\infty]^C \times \{0\}^{D \setminus C}\right). 
\label{eqn:muconv1}
\end{align} 
Therefore, if $\mu\left((y,\infty]^C \times \{0\}^{D \setminus C}\right)>0$, and hence $\mu\left(\mathbb{E}_C\right)>0$, this indicates that in~\eqref{eqn:muconv1} we may be able to consider $z=z_t \to 0$ at a suitable rate in $t$ and still observe 
\begin{align}
 \lim_{t \to \infty} t\text{pr}\left\{X/t \in (y,\infty]^{C} \times [0,z_t]^{D \setminus C}\right\} >0. \label{eqn:mu-1}
\end{align}
A consequence of a positive limit in~\eqref{eqn:mu-1} is that $\text{pr}\{X/t \in (y,\infty]^{C} \times [0,z_t]^{D \setminus C}\} \in \RV_{-1}$. When the limit in~\eqref{eqn:mu-1} is zero, then either $z_t \to 0$ too quickly --- consider for example the case $z_t \equiv 0$ --- or $\mu$ places no mass on $\mathbb{E}_C$. In these cases we focus on the rate of convergence to zero in~\eqref{eqn:mu-1}. Taking $z_t = z t^{\delta-1}$ for $\delta \in [0,1]$, and rephrasing in terms of min and max projections, our main assumption is as follows.
\begin{assumption}\label{assumption:HRV}
Suppose we have regular variation on the cone $\mathbb{E}=[0,\infty]^d\setminus \{0\}$, so that equation~\eqref{eqn:MRV} is satisfied with $\mu$ homogeneous of order $-1$. For all $C\subseteq D$, let $X_\wedge^C = \min_{i\in C} X_i$ and $X_\vee^{D\setminus C} = \max_{i\in D\setminus C}X_i$. We assume that for all $\delta\in[0,1]$,
\begin{align}
\text{pr}\left\{\left(X_\wedge^C/t,X_\vee^{D\setminus C}/t^\delta\right)\in (y,\infty]\times[0,z]\right\} \in {\rm{RV}}_{-1/\tau_C(\delta)}, ~~~\text{$t\rightarrow\infty$},~~~\text{$0<y,z<\infty$},\label{eqn:RVassumption1}
\end{align}
and that there exists $\delta^*<1$ such that $\tau_C(\delta^*)=1$ for all $C$ such that $\mu(\mathbb{E}_C)>0$, and $\tau_C(\delta^*)<1$ for all $C$ such that $\mu(\mathbb{E}_C)=0$.
\end{assumption}
\noindent We note that the probability in~\eqref{eqn:RVassumption1}, and hence $\tau_C(\delta)$, is non-decreasing in $\delta$. The case $\delta=0$ and $z=-1/\log p$ is identical to a regular variation assumption on the truncated variables $X^*$; allowing $\delta>0$ produces a more diverse range of possibilities. 

Through the final line of Assumption~\ref{assumption:HRV}, the indices $\tau_C(\delta)$ contain information on the limiting extremal dependence structure; the challenge is to find a suitable $\delta^*$, noting that if $\delta$ is too small we could have $\tau_C(\delta)<1$ even when $\mu(\mathbb{E}_C)>0$, but if $\delta$ is too large, some $\tau_C(\delta)$ could be close to one even when $\mu(\mathbb{E}_C)=0$, making the detection problem difficult in light of statistical uncertainty. These issues are discussed further below and in Section~\ref{sec:methodology}.

 Overall, examining the regular variation properties in Assumption~\ref{assumption:HRV} leads to understanding of the sub-asymptotic behaviour of $\mu$ in relation to which cones $\mathbb{E}_C$ are charged with mass. This is analogous to determining the support of $H$ in~\eqref{eqn:spectralMeasure}. In the remainder of this section, we illustrate the utility and validity of our hidden regular variation assumption via examples, and discuss properties of $\tau_{C}(\delta)$. Theorems~\ref{thm:etatau1}~and~\ref{thm:etatau2} clarify some links between $\tau_C(\delta)$ and $\eta_C$.

\begin{theorem}
Assume regular variation, or hidden regular variation, on $\mathbb{E}_C^*$ defined in \eqref{eqn:estar}, such that $\text{pr}\left(X^{C}_\wedge>t\right)\in{\rm{RV}_{-1/\eta_C}}$. Suppose further that for all $\overline{C}\subseteq D$ such that $\overline{C} \supseteq C$, Assumption~\ref{assumption:HRV} is satisfied, so that for $\delta=1$, $\text{pr}\left(X^{\overline{C}}_\wedge>t,X_\vee^{D\setminus \overline{C}}\leq t\right)\in{\rm{RV}}_{-1/\tau_{\overline{C}}(1)}$. Then $\eta_C = \max_{\overline{C}: C \subseteq \overline{C}} \tau_{\overline{C}}(1)$.
\label{thm:etatau1}
\end{theorem}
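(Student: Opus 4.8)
The plan is to reduce the statement to a simple partition of the event $\{X_\wedge^C>t\}$ according to which further coordinates also exceed the level $t$, together with the elementary fact that a finite sum of non-negative regularly varying functions is again regularly varying, with index the largest of the summand indices.

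First I would establish an exact decomposition. On $\{X_\wedge^C>t\}$ define the random index set $\widehat C=C\cup\{i\in D\setminus C:X_i>t\}$; then $X_\wedge^{\widehat C}>t$ and $X_\vee^{D\setminus\widehat C}\le t$ automatically. Partitioning over the possible deterministic values $\overline C$ of $\widehat C$ yields
\begin{align}
\text{pr}\!\left(X_\wedge^C>t\right)=\sum_{\overline C\,:\,C\subseteq\overline C\subseteq D}\text{pr}\!\left(X_\wedge^{\overline C}>t,\ X_\vee^{D\setminus\overline C}\le t\right),
\label{eqn:partC}
\end{align}
with the convention that the constraint on $X_\vee^{D\setminus\overline C}$ is vacuous when $\overline C=D$. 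By the hypotheses of the theorem each summand lies in $\RV_{-1/\tau_{\overline C}(1)}$; moreover, choosing any $i\in\overline C$ shows the summand is bounded above by $\text{pr}(X_i>t)=1-e^{-1/t}\le t^{-1}$, so by the standard growth behaviour of slowly varying functions $\tau_{\overline C}(1)\le1$ and each index $-1/\tau_{\overline C}(1)$ is a well-defined real.

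Next I would apply the summation lemma: if $f_1,\dots,f_k\ge0$ with $f_j\in\RV_{\rho_j}$ and $g:=\sum_j f_j$ eventually positive, then $g\in\RV_{\max_j\rho_j}$. Indeed $g(tx)/g(t)=\sum_j\lambda_j(t)\,\{f_j(tx)/f_j(t)\}$ with weights $\lambda_j(t)=f_j(t)/g(t)\in[0,1]$ summing to $1$; each ratio $f_j(tx)/f_j(t)\to x^{\rho_j}$ and the weights of the strictly lower-order summands vanish, so $g(tx)/g(t)\to x^{\max_j\rho_j}$. Applying this to \eqref{eqn:partC}, whose left-hand side is by hypothesis in $\RV_{-1/\eta_C}$ while the right-hand side is a finite non-negative sum with summand indices $-1/\tau_{\overline C}(1)$, and using uniqueness of the regular-variation index, we get $-1/\eta_C=\max_{\overline C:\,C\subseteq\overline C}\{-1/\tau_{\overline C}(1)\}=-1\big/\max_{\overline C:\,C\subseteq\overline C}\tau_{\overline C}(1)$, i.e.\ $\eta_C=\max_{\overline C:\,C\subseteq\overline C}\tau_{\overline C}(1)$.

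I do not anticipate a real obstacle: the content is entirely in the identity \eqref{eqn:partC} and the summation property, both routine. The only points requiring a little care are the degenerate term $\overline C=D$, where $D\setminus\overline C=\emptyset$ and the max-constraint must be read as vacuous, and the positivity needed for the summation lemma — which holds because the hypotheses place both $\text{pr}(X_\wedge^C>t)$ and each summand in a regular-variation class, so each is positive-valued for large $t$.
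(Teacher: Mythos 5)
Your proof is correct and follows essentially the same route as the paper: the identity \(\text{pr}(X_\wedge^C>t)=\sum_{\overline C:C\subseteq\overline C}\text{pr}(X_\wedge^{\overline C}>t,\,X_\vee^{D\setminus\overline C}\le t)\) is exactly the paper's partition of \((1,\infty]^C\times[0,\infty]^{D\setminus C}\) into the sets \((1,\infty]^{\overline C}\times[0,1]^{D\setminus\overline C}\), merely phrased via a random index set. The only difference is that you make explicit the sum-of-regularly-varying-functions lemma that the paper compresses into ``from which the result follows,'' which is a welcome addition but not a different argument.
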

\begin{proof}
We have
\[\text{pr}\left(X^{C}_\wedge>t\right) = \text{pr}\left(X/t\in(1,\infty]^{C}\times[0,\infty]^{D\setminus C}\right);\]
\[\text{pr}\left(X^{\overline{C}}_\wedge>t,X_\vee^{D\setminus\overline{C}}\leq t\right) = \text{pr}\left(X/t\in(1,\infty]^{\overline{C}}\times[0,1]^{D\setminus\overline{C}}\right).
\] 
By the partition $(1,\infty]^C\times[0,\infty]^{D\setminus C}=\bigcup_{\overline{C}:C\subseteq \overline{C}}(1,\infty]^{\overline{C}}\times[0,1]^{D\setminus\overline{C}}$, we deduce that
\[
\text{pr}\left(X/t\in(1,\infty]^{C}\times[0,\infty]^{D\setminus C}\right) = \sum\limits_{\overline{C}:C\subseteq\overline{C}}\text{pr}\left(X/t\in(1,\infty]^{\overline{C}}\times[0,1]^{D\setminus\overline{C}}\right),
\]
from which the result follows.
\end{proof}
We note $\tau_D(\delta)$ does not depend on $\delta$; we therefore denote it by $\tau_D$.
\begin{theorem}
For all $C\in 2^{D}\setminus\emptyset$ with $|C|\geq 2$, assume regular variation or hidden regular variation on $\mathbb{E}_{C}^*$ with coefficient of tail dependence $\eta_{C}$, and suppose Assumption~\ref{assumption:HRV} holds. For any set $C$ with $|C|\geq 2$, assume that for any $\overline{C}\supseteq C$, $\eta_{\overline{C}'} < \eta_{\overline{C}}$ for all $\overline{C}'\supset \overline{C}$. Then $\tau_{C}(1)=\eta_{C}$, and $\tau_C(\delta)\leq\eta_C$ for all $\delta\in[0,1]$.
\label{thm:etatau2}
\end{theorem}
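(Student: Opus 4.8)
The plan is to obtain both claims from Theorem~\ref{thm:etatau1} combined with two facts already in hand: that the probability in~\eqref{eqn:RVassumption1}, and hence $\tau_C(\cdot)$, is non-decreasing in $\delta$, and the strict ordering of the coefficients $\eta_{\overline{C}}$ assumed in the statement. Since Theorem~\ref{thm:etatau1} gives $\eta_C=\max_{\overline{C}:C\subseteq\overline{C}}\tau_{\overline{C}}(1)$, the whole problem reduces to showing that this maximum is attained at $\overline{C}=C$. Granting that, the second claim is immediate: $\tau_C(\delta)\le\tau_C(1)=\eta_C$ for all $\delta\in[0,1]$ by monotonicity in $\delta$.

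First I would record that the hypotheses of Theorem~\ref{thm:etatau1} apply to $C$ itself --- regular or hidden regular variation on $\mathbb{E}_C^*$ with coefficient $\eta_C$ is assumed, and Assumption~\ref{assumption:HRV} is assumed for all supersets of $C$ --- so $\eta_C=\max_{\overline{C}:C\subseteq\overline{C}}\tau_{\overline{C}}(1)$, and in particular $\tau_C(1)\le\eta_C$.

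The substance is the reverse inequality, which I would prove by bounding every ``proper-superset'' term strictly below $\eta_C$. Fix $\overline{C}$ with $C\subsetneq\overline{C}\subseteq D$; then $|\overline{C}|\ge|C|\ge2$, so $\mathbb{E}_{\overline{C}}^*$ carries regular or hidden regular variation with coefficient $\eta_{\overline{C}}$, and Assumption~\ref{assumption:HRV} holds for all supersets of $\overline{C}$, a subcollection of the supersets of $C$. Applying Theorem~\ref{thm:etatau1} with $\overline{C}$ in place of $C$ gives $\eta_{\overline{C}}=\max_{\overline{\overline{C}}:\overline{C}\subseteq\overline{\overline{C}}}\tau_{\overline{\overline{C}}}(1)\ge\tau_{\overline{C}}(1)$. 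Using the hypothesis $\eta_{\overline{C}'}<\eta_{\overline{C}}$ for all $\overline{C}'\supset\overline{C}$ with the roles $\overline{C}=C$ and $\overline{C}'=\overline{C}$, we get $\tau_{\overline{C}}(1)\le\eta_{\overline{C}}<\eta_C$ for every $\overline{C}\supsetneq C$. Hence in $\eta_C=\max_{\overline{C}:C\subseteq\overline{C}}\tau_{\overline{C}}(1)$ every term but the $\overline{C}=C$ one is strictly smaller than $\eta_C$, so the maximum is realised by that term and $\tau_C(1)=\eta_C$; equivalently, in the decomposition of $\text{pr}(X_\wedge^C>t)$ from the proof of Theorem~\ref{thm:etatau1}, all summands other than $\text{pr}(X_\wedge^C>t,X_\vee^{D\setminus C}\le t)$ have strictly smaller index of regular variation and so are asymptotically negligible. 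Combining $\tau_C(1)=\eta_C$ with the $\delta$-monotonicity of $\tau_C(\cdot)$ then completes the proof.

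I do not expect a genuine obstacle here --- the argument is essentially bookkeeping with nested applications of Theorem~\ref{thm:etatau1}. The only points needing care are checking that each such application is legitimate (the cardinality bound $|\overline{C}|\ge2$ and the requisite instances of Assumption~\ref{assumption:HRV}) and using the strictness $\eta_{\overline{C}}<\eta_C$ to identify which summand attains the maximum; without strictness the argument would deliver only $\tau_C(1)\le\eta_C$. I note that for this argument only the case $\overline{C}=C$ of the monotonicity hypothesis is actually used.
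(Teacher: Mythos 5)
Your proof is correct, but it is organised differently from the paper's. The paper proceeds by downward induction from $D$: it first establishes the exact equality $\tau_{\overline{C}}(1)=\eta_{\overline{C}}$ for every superset $\overline{C}$ of $C$, level by level, substituting these equalities into the max formula of Theorem~\ref{thm:etatau1} at the next level down; this requires invoking the strict ordering $\eta_{\overline{C}'}<\eta_{\overline{C}}$ at every intermediate level between $C$ and $D$, which is why the hypothesis is stated for all $\overline{C}\supseteq C$. You avoid the induction entirely: a single application of Theorem~\ref{thm:etatau1} to each proper superset $\overline{C}\supsetneq C$ already yields the one-sided bound $\tau_{\overline{C}}(1)\leq\eta_{\overline{C}}$, and combining this with the strictness $\eta_{\overline{C}}<\eta_C$ (the $\overline{C}=C$ instance of the hypothesis alone) kills every term of the maximum except the $\overline{C}=C$ one. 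Your closing remark is accurate and worth something: your argument proves the theorem under the strictly weaker hypothesis that $\eta_{\overline{C}'}<\eta_C$ for all $\overline{C}'\supset C$, whereas the paper's inductive route genuinely consumes the nested chain of strict inequalities. The trade-off is that the paper's proof delivers, along the way, the equalities $\tau_{\overline{C}}(1)=\eta_{\overline{C}}$ for all intermediate supersets, which your version does not establish (and under your weakened hypothesis need not hold). Both proofs conclude identically via the monotonicity of $\tau_C(\delta)$ in $\delta$.
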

\begin{proof}
Since $\mathbb{E}_D=\mathbb{E}_D^*$, we have $\eta_{D}=\tau_{D}$, and by Theorem~\ref{thm:etatau1}, for any set $C_{d-1}\subset D$ with $|C_{d-1}|=d-1$,
\[
	\eta_{C_{d-1}} = \max\left\{\tau_{C_{d-1}}(1),\tau_{D}\right\}= \max\left\{\tau_{C_{d-1}}(1),\eta_{D}\right\}.
\]
Since, by assumption, $\eta_{{C}_{d-1}}>\eta_{D}$, we have $\eta_{C_{d-1}}=\tau_{C_{d-1}}(1)$. Similarly, for any set $C_{d-2}\subset C_{d-1}$ with $|C_{d-2}|=d-2$,
\[
\eta_{C_{d-2}}=\max\left\{\tau_{{C}_{d-2}}(1),\tau_{C_{d-1}}(1),\tau_{D}\right\} =\max\left\{\tau_{{C}_{d-2}}(1),\eta_{C_{d-1}},\eta_{D}\right\}.
\]
Again, since $\eta_{C_{d-2}}>\eta_{C_{d-1}}>\eta_{D}$ for all $C_{d-2}\subset C_{d-1}$, then $\eta_{C_{d-2}} =\tau_{{C}_{d-2}}(1)$. The result $\tau_C(1)=\eta_C$ follows by iteration for any set $C$ with $|C|\geq 2$. Since $\tau_C(\delta)$ is non-decreasing in $\delta$, $\tau_C(1)=\max_{\delta\in[0,1]} \tau_C(\delta)$, so $\tau_C(\delta)\leq\eta_C$ for all $\delta\in[0,1]$.
\end{proof}

In the Appendix and Supplementary Material, respectively, we calculate the value of $\tau_{C}(\delta)$, with $C\in 2^{D}\setminus \emptyset$, for a range of bivariate and multivariate copulas. For the bivariate case, we restrict our investigation to a subclass of bivariate extreme value distributions \citep{Tawn1988} that covers all possible combinations of cones $\mathbb{E}_C$ charged with mass, focusing on the case where the spectral density is regularly varying at 0 and 1. For multivariate cases there are many more possibilities, so we study certain trivariate extreme value distributions \citep{Tawn1990}, which have $\chi_C>0$ for at least one set $|C|\geq 2$, and two classes of copula having $\chi_C=0$ for all $|C|\geq 2$. The results are summarized here.

The bivariate extreme value distribution in standard Fr\'{e}chet margins has distribution function of the form $F(x,y) = \exp\left\{-V(x,y)\right\}$
for some exponent measure
\begin{align}
        V(x,y) = 2\int_{0}^{1} \max\left\{w/x,(1-w)/y\right\}dH(w),~~~\text{$x,y>0$},
\label{eqn:bveDist}
\end{align}
where $H$ denotes the spectral measure defined in equation~\eqref{eqn:spectralMeasure} on the unit simplex $[0,1]$. In the bivariate case, $\mathbb{E}=[0,\infty]^2\setminus\{0\}$ can be partitioned into three natural cones: $\mathbb{E}_1$, $\mathbb{E}_2$ and $\mathbb{E}_{1,2}$. If $H(\{0\})=\theta_{2}\in[0,1/2]$ and $H(\{1\})=\theta_{1}\in[0,1/2]$, the distribution places mass $\theta_2$, $\theta_1$, $\theta_{1,2}=1-(\theta_1+\theta_2)$ in the three cones. If $\theta_1+\theta_2=1$, the variables are independent, and $\mu(\mathbb{E}_{1,2})=0$. In this case, all the limiting mass is placed on  $\mathbb{E}_1$ and $\mathbb{E}_2$. Here, Assumption~\ref{assumption:HRV} holds for $C=\{1\},\{2\}$ and $\{1,2\}$ with $\tau_{1}(\delta)=\tau_{2}(\delta)=1$ for all $\delta\in[0,1]$, and $\tau_{1,2}=\eta_{1,2}=1/2$.

When $\theta_1+\theta_2<1$, $\mu(\mathbb{E}_{1,2})>0$ and $\tau_{1,2}=\eta_{1,2}=1$, i.e., both variables can be simultaneously large. If $\theta_i>0$, it follows that $\tau_i(\delta)=1$ for $\delta\in[0,1]$ and $i=1,2$, and there is mass on the corresponding cone $\mathbb{E}_i$. However, when $\theta_1=\theta_2=0$, there is no mass on either of these cones, and additional conditions are required for~\eqref{eqn:RVassumption1} to hold. We suppose that $H$ is absolutely continuous on $(0,1)$ with Lebesgue density $h(w)=dH(w)/dw$ satisfying $h(w)\sim c_{1}(1-w)^{s_{1}}$ as $w\rightarrow 1$, and $h(w)\sim c_{2}w^{s_{2}}$ as $w\rightarrow 0$, for $s_1, s_2>-1$ and $c_1, c_2 >0$. In the Appendix, we show that for $i=1,2$, $\tau_{i}(\delta)=\left\{(s_i+2)-\delta(s_i+1)\right\}^{-1}$. To illustrate this final case, consider the bivariate extreme value distribution with the logistic dependence structure \citep{Tawn1988}, with $V(x,y)=\left(x^{-1/\alpha}+y^{-1/\alpha}\right)^{\alpha}$ and
\begin{align}
        h(w) = \frac{1}{2}\left(\alpha^{-1}-1\right)\left\{w^{-1/\alpha} + (1-w)^{-1/\alpha}\right\}^{\alpha-2}\left\{w(1-w)\right\}^{-1-1/\alpha},
\label{eqn:logisticDensity}
\end{align}
$0<w<1$, $\alpha\in(0,1)$. For this model $s_{1}=s_{2}=-2+1/\alpha$, and so $\tau_{1}(\delta)=\tau_{2}(\delta)=\alpha/(1+\alpha\delta-\delta)$ which increases from $\tau_i(\delta)=\alpha<1$ at $\delta=0$ to $\tau_i(\delta)=1$ at $\delta=1$.

\begin{table}[ht]
\resizebox{\textwidth}{!}{%
\centering
\begin{tabular}{|c|c|c|c|}
\hline
& $|C|=1$ & $|C|=2$ & $|C|=3$\\
\hline
(i) & $\tau_{1}(\delta)=\tau_{2}(\delta)=\tau_{3}(\delta)=1$ & $\tau_{1,2}(\delta)=\tau_{1,3}(\delta)=\tau_{2,3}(\delta)=1/2$ &$\tau_{1,2,3}=1/3$\\
(ii) & $\tau_{1}(\delta)=\tau_{2}(\delta)=\frac{\alpha}{1+\alpha\delta-\delta}, \tau_{3}(\delta)=1$ & $\tau_{1,2}(\delta)=1, \tau_{1,3}(\delta)=\tau_{2,3}(\delta)=\frac{\alpha}{\alpha\delta+1+\alpha-\delta}$ & $\tau_{1,2,3}=1/2$\\
(iii) & $\tau_{1}(\delta)=\tau_{2}(\delta)=\tau_{3}(\delta)=\frac{\alpha}{1+\alpha\delta-\delta}$ & $\tau_{1,2}(\delta)=\tau_{1,3}(\delta)=\tau_{2,3}(\delta)=\frac{\alpha}{2+\alpha\delta-2\delta}$ &$\tau_{1,2,3}=1$\\
(iv) & $\tau_{1}(\delta)=\tau_{2}(\delta)=\tau_{3}(\delta)=1$ & $\tau_{1,2}(\delta)=\tau_{1,3}(\delta)=\tau_{2,3}(\delta)=2^{-\alpha}$ & $\tau_{1,2,3}=3^{-\alpha}$\\
\hline
\end{tabular}}
\caption{Values of $\tau_{C}(\delta)$ for some trivariate copula examples. For all logistic models, the dependence parameter $\alpha$ satisfies $0< \alpha <1$, with larger $\alpha$ values corresponding to weaker dependence. Case~(i): independence; case~(ii): independence and bivariate logistic; case~(iii): trivariate logistic; case~(iv): trivariate inverted logistic.}
   \label{table:tauExamples}
\end{table}

When $d=3$, there are many more possibilities for combinations of cones $\mathbb{E}_C$ with mass. Table~\ref{table:tauExamples} gives $\tau_C(\delta)$ for four examples, in each case identifying $\tau_C(\delta)$ on cones $\mathbb{E}_C$ with $|C|=1,2,3$. Cases (i)-(iii) in Table~\ref{table:tauExamples} are all special cases of the trivariate extreme value copula. Case (i) is the independence copula, which has limit mass on $\mathbb{E}_1$, $\mathbb{E}_2$ and $\mathbb{E}_3$. For the $d$-dimensional independence copula, $\tau_C(\delta)=|C|^{-1}$ for $|C|\leq d$, and does not depend on the value of $\delta$. Case (ii) is the copula corresponding to variables $(X_1,X_2)$ following a bivariate extreme value logistic distribution~\eqref{eqn:logisticDensity}, independent of $X_3$. Here all the limit mass is placed on $\mathbb{E}_{1,2}$ and $\mathbb{E}_3$. Again, $\tau_C(\delta)$ differs between cones where $C$ is of different dimension. The trivariate extreme value logistic model, case (iii), places all extremal mass on $\mathbb{E}_{1,2,3}$, so that $\tau_{1,2,3}=1$, and $\tau_C(\delta)<1$ when $\delta<1$, for all $C$ with $|C|<3$. Since this is a symmetric model, $\tau_C(\delta)$ is the same on all three cones $\mathbb{E}_C$ with $|C|=1$, and is also equal for each cone with $|C|=2$. 

Copula (iv) is the inverted extreme value copula (\citeauthor{Ledford1997}, 1997), with a symmetric logistic dependence model. It places all limiting mass on cones $\mathbb{E}_C$ with $|C|=1$, but unlike the independence copula, has sub-asymptotic dependence, reflected by the values of $\tau_C(\delta)$ for cones $\mathbb{E}_C$ with $|C|=2,3$, which are closer to one than in the independence case. The values of $\tau_C(\delta)$ do not depend on $\delta$ for this case.

The Gaussian copula with covariance matrix $\Sigma$ also exhibits asymptotic independence, with all limit mass on $\mathbb{E}_C$ with $|C|=1$. We study the values of $\tau_C(\delta)$ for the trivariate case in the Supplementary Material. For sets $C=\{i\}$, $i=1,2,3$, $\tau_C(\delta)=1$ only if $\delta\geq \max\left(\rho^2_{ij},\rho^2_{ik}\right)$, where $\rho_{ij}>0$ is the Gaussian correlation parameter for variables $i$ and $j$; otherwise, $\tau_C(\delta)<1$. We also know that $\tau_{1,2,3}=\eta_{1,2,3}=\left(1_3^T\Sigma^{-1}1_3\right)^{-1}$, with $1_d\in\mathbb{R}^d$ denoting a vector of 1s. For $C=\{i,j\}$, $i< j$, under Assumption \ref{assumption:HRV}, Theorem~2 leads to $\tau_C(1)=\eta_C=\left(1_2^T\Sigma_{i,j}^{-1}1_2\right)^{-1}$, with $\Sigma_{i,j}$ denoting the submatrix of $\Sigma$ corresponding to variables $i$ and $j$, provided the correlations satisfy $1+\rho_C\neq \sum\limits_{C':|C'|=2,C'\neq C}\rho_{C'}$; for $\delta<1$, $\tau_C(\delta)\leq\eta_C$. 


\section{Methodology}\label{sec:methodology}
\subsection{Introduction to methodology}
The coefficient $\tau_C(\delta)$ defined in Assumption~\ref{assumption:HRV} reveals whether the measure $\mu$ places mass on the cone $\mathbb{E}_C$. For $\mu(\mathbb{E}_C)>0$, we assume there exists $\delta^*<1$ such that $\tau_C(\delta^*)=1$, but we could still have $\tau_C(\delta)<1$ for values of $\delta<\delta^*$. For cones with $\mu(\mathbb{E}_C)=0$, the detection problem becomes easier the further $\tau_C(\delta)$ is from 1, and since $\tau_C(\delta)$ is non-decreasing in $\delta$ it is ideal to take $\delta$ as small as possible. We therefore have a trade-off between choosing $\delta$ large enough that $\tau_C(\delta)=1$ on cones $\mathbb{E}_C$ with extremal mass, but small enough that $\tau_C(\delta)$ is not close to 1 on those $\mathbb{E}_C$ without extremal mass. For the examples in Table~\ref{table:tauExamples}, we could take $\delta=0$, since the cones with $\mu(\mathbb{E}_C)>0$ have $\tau_C(\delta)=1$ for all $\delta\in[0,1]$. However, the Gaussian case reveals that although $\mu(\mathbb{E}_i)>0$, for $i=1,\dots,d$, we can have $\tau_i(0)<1$, so it is necessary to take $\delta>0$ for correct identification of cones with extremal mass.

We therefore introduce two approaches for determining the extremal dependence structure of a set of variables. In the first method we set $\delta=0$, and apply a truncation to the variables $X$ by setting any values below some marginal threshold equal to zero. This transformation is analogous to the approach of \cite{Goix2015}, who partition the non-negative orthant in a similar way, but we additionally exploit Assumption~\ref{assumption:HRV}. In our second method, we consider $\delta>0$ when exploiting the regular variation assumption. As well as aiming to determine the extremal dependence structure, both methods estimate the proportion of extremal mass associated with each cone $\mathbb{E}_C$.

\subsection{Method 1: $\delta=0$}\label{subsec:method1}
We apply Assumption~\ref{assumption:HRV} with $\delta$ equal to zero by applying truncation~\eqref{eqn:truncation} to variables $X$ for some choice of $p$. Recall that the cone $\mathbb{E}$ equals $\bigcup_{C\in 2^{D}\setminus\emptyset}\mathbb{E}_{C}$, with the components of the union disjoint and defined as in \eqref{eqn:subcones}. We wish to partition $\mathbb{E}$ with approximations to $\mathbb{E}_{C}$, by creating regions where components indexed by $C$ are large and those not in $C$ are small. This is achieved via regions of the form
\begin{align}
	E_C = \left\{x^*\in\mathbb{E}: x_C^* \in(-1/\log p,\infty]^{|C|},~x_{D\setminus C}^* \in\{0\}^{|D\setminus C|}\right\}.
\label{eqn:regions}
\end{align}
Define the variable $Q=\min\left(X_i^*:X_i^*>0,i=1,\dots,d\right)$, and recall that we denote $X_\wedge^C = \min_{i\in C} X_i$ and $X_\vee^{D\setminus C} = \max_{i\in D\setminus C}X_i$. Under Assumption~\ref{assumption:HRV}, as $q\rightarrow\infty$,
\[
	\text{pr}\left(Q > q\mid X^*\in E_C\right)\propto \text{pr}\left(X_\wedge^C>q, X_\vee^{D\setminus C}<-1/\log p\right) \in{\rm{RV}}_{-1/\tau_C(0)},
\]
so that $\text{pr}\left(Q > q\mid X^*\in E_C\right)=L_C(q)q^{-1/\tau_C(0)}$ for some slowly varying function $L_C$. We now let $\tau_C=\tau_C(0)$, and assume that the model
\begin{align}
	\text{pr}\left(Q>q\mid X^*\in E_C\right) = K_Cq^{-1/\tau_C}, ~~~~q>u_C,
\label{eqn:HRVassumption}
\end{align}
holds for a high threshold $u_C$, with $\tau_C\in(0,1]$ and $K_C>0$ for all $C\in 2^D\setminus\emptyset$. Here, the slowly varying function $L_{C}$ is replaced by the constant $K_{C}$ as a modelling assumption, removing the possibility of having $L_{C}(q)\rightarrow 0$ as $q\rightarrow\infty$. 

Model~\eqref{eqn:HRVassumption} may be fitted using a censored likelihood approach. Suppose that we observe $n_C$ values $q_{1},\dots q_{n_{C}}$ of $Q$ in $E_{C}$. The censored likelihood associated with $E_C$, is
\begin{align}
L_{C}(K_{C},\tau_{C}) = \prod_{j=1}^{n_{C}} \Big(1 - K_{C}u_{C}^{-1/\tau_{C}} \Big)^{\mathbbm{1}_{\{q_{j}\leq u_{C}\}}} \bigg( \frac{K_{C}}{\tau_{C}}q_{j}^{-1-1/\tau_{C}}\bigg)^{\mathbbm{1}_{\{q_{j}> u_{C}\}}},
\label{eqn:censored}
\end{align}
with $u_{C}$ a high threshold. Analytical maximization of \eqref{eqn:censored} leads to closed-form estimates of $(K_{C},\tau_{C})$, with the latter corresponding to the Hill estimate \citep{Hill1975}. In particular,
\begin{align*}
	\hat{\tau}_C = \left(\sum\limits_{j=1}^{n_C}\mathbbm{1}_{\{q_j>u_C\}}\right)^{-1}\sum\limits_{j=1}^{n_C}\mathbbm{1}_{\{q_j>u_C\}}\log\left(\frac{q_j}{u_C}\right),~~~ \hat{K}_C = \left(\frac{\sum_{j=1}^{n_C}\mathbbm{1}_{\{q_j>u_C\}}}{n_C}\right)u_C^{1/\hat{\tau}_C}.
\end{align*}
This estimate of $\tau_C$ can exceed 1, so we prefer to use $\min(\hat{\tau}_C,1)$, with an appropriate change to $\hat{K}_C$. The Hill estimator for $\tau_C$ is consistent if $u_C\rightarrow\infty$, $\sum_j \mathbbm{1}_{\{q_j>u_C\}}\rightarrow\infty$ and $\sum_j\mathbbm{1}_{\{q_j>u_C\}}/n_C\rightarrow 0$; the assumption of $L_C(q)\sim K_C>0$ is not required for this. The second condition ensures that the number of points in $E_C$ with $Q > u_C$ goes to infinity, and since the expected number $n_C\text{pr}(Q>u_C\mid X^*\in E_C)\sim n_C K_C u_C^{-1/\tau_C}$, this entails $u_C=o(n_C^{\tau_C})$.

The method of \cite{Goix2015} produces empirical estimates of $\text{pr}(X\in E_{C}\mid R>r_{0})$, for $R=\|X\|_\infty$, and some value of $r_{0}$ within the range of observed values. These estimates are then assumed to hold for all $R>r_0$ and are used to approximate the limit. If the conditional probability $\text{pr}(X\in E_C\mid R>r)$ changes with $r>r_0$, \citeauthor{Goix2015}\ estimate this as a positive constant or as zero. In contrast, our semiparametric method allows us to estimate $\text{pr}(X^*\in E_{C}\mid Q>q)$, for all $q$ above a high threshold, via
\begin{align}
	\text{pr}(X^*\in E_C\mid Q>q) = \frac{\text{pr}(Q>q\mid X^*\in E_C)\text{pr}(X^*\in E_C)}{\sum\limits_{C'\in 2^D\setminus\emptyset}\text{pr}(Q>q\mid X^*\in E_{C'})\text{pr}(X^*\in E_{C'})},~~~\text{$C \in 2^{D}\setminus\emptyset$},
    \label{eqn:mainResult}
\end{align}
with $E_{C}$ as in \eqref{eqn:regions}. Our estimate of probability~\eqref{eqn:mainResult} varies continuously with $q$, with this variation being determined by the estimated values $\hat\tau_C$, for $C\in 2^D\setminus\emptyset$. In situations where sub-asymptotic dependence leads to many points in a region $E_C$, but $\mu(\mathbb{E}_C) = 0$ and $\hat{\tau}_C<1$, this extrapolation can be helpful in obtaining a better approximation to the limit. The relative merits of these differences to the approach of \citeauthor{Goix2015}, which are common to Methods~1~and~2, are illustrated in Sections~\ref{sec:simulation}~and~\ref{sec:data}.

The right-hand side of equation~\eqref{eqn:mainResult} consists of two types of component. We estimate terms of the form $\text{pr}(X^*\in E_C)$ empirically, and we estimate those of the form $\text{pr}(Q>q\mid X^*\in E_C)$ as in \eqref{eqn:HRVassumption} by replacing $K_C$ and $\tau_C$ by their estimates, and evaluating for some large choice of $q$, discussed in Section~\ref{sec:data}. This approach yields an estimate for the proportion of mass in each region. We denote the estimated vector of these proportions by $\hat{p} = (\hat{p}_{C}:C\in 2^{D}\setminus\emptyset)$. In order to obtain a sparse representation of the mass on the simplex, we follow \citeauthor{Goix2015}\ (\citeyear{Goix2016}, \citeyear{Goix2015}) and ignore any mass that has been detected which is considered to be negligible; see use of parameter $\pi$, below. A summary of our method is as follows. 

First, transform the data to standard Fr\'{e}chet margins, and for a choice of the tuning parameter $p$, apply transformation \eqref{eqn:truncation}. Then assign each transformed observation to a region $E_C$ as in \eqref{eqn:regions}, removing any all-zero points. For each region $E_C$ containing more than $m$ points, fit model~\eqref{eqn:HRVassumption} for a choice of threshold $u_C$, and estimate $\text{pr}(X^*\in E_C\mid Q>q)$ for a large value of $q$ by equation \eqref{eqn:mainResult}. Set $\text{pr}(X^*\in E_C\mid Q>q)=0$ in the remaining regions, denoting the resulting estimate by $\hat{p}_C$. Finally, if $\hat{p}_C < \pi$, for a choice of the tuning parameter $\pi$, set $\hat{p}_C$ to zero, renormalizing the resulting vector.

The parameter $m$ ensures there are enough points to estimate the parameters on each cone $\mathbb{E}_C$. In simulations, it was found not to have a significant effect on results, so we take $m=1$.

\subsection{Method 2: $\delta>0$}\label{subsec:method2}
An alternative to setting $\delta=0$ and partitioning the positive orthant using regions $E_C$, is to consider $\delta>0$ in the application of Assumption~\ref{assumption:HRV}, specifically $\text{pr}\left(X_\wedge^C>t,X_\vee^{D\setminus C}\leq t^\delta\right)\in {\rm{RV}}_{-1/\tau_C(\delta)}$. However, unlike with $\delta=0$, this does not lead directly to a univariate structure variable with tail index $1/\tau_C(\delta)$. We instead consider $\text{pr}\left\{X_\wedge^C>t,X_\vee^{D\setminus C}\leq \left(X_\wedge^C\right)^\delta\right\}=\text{pr}\left(X_\wedge^C >t, X\in\tilde{E}_C\right)$, with $\tilde{E}_C$ defined as
\begin{align*}
\tilde{E}_C = \left\{x\in\mathbb{E}:x_\vee^{D\setminus C} \leq \left(x_\wedge^C\right)^\delta \right\},~|C|<d;~~~~~	\tilde{E}_D = \mathbb{E}\setminus \bigcup_{C\in 2^D\setminus\emptyset:|C|<d}\tilde{E}_C,
\end{align*}
for each $C\in 2^D\setminus\emptyset$. We denote the corresponding tail index as $1/\tilde{\tau}_C(\delta)$, and assume that
\begin{align*}
\text{pr}\left(X_\wedge^C >q, X\in\tilde{E}_C\right)\in{\rm{RV}}_{-1/\tilde\tau_C(\delta)},~~~~q\rightarrow\infty.
\end{align*}
Analogously to equation~\eqref{eqn:HRVassumption} of Method~1, for each region $\tilde{E}_C$, we assume the model
\begin{align}
\text{pr}(X_\wedge^C >q\mid X\in\tilde{E}_C) = K_C q^{~-1/\tilde{\tau}_{C}(\delta)},~~~q>u_{C},
\label{eqn:HRV}
\end{align}
for some large threshold $u_C$, where estimates of $K_C$ and $\tilde{\tau}_C(\delta)$ are again obtained by maximizing a censored likelihood. In the Supplementary Material, we examine estimates of $\tilde{\tau}_C(\delta)$, which we find to reasonably approximate the true values of $\tau_C(\delta)$. This indicates that the indices $\tilde{\tau}_C(\delta)$ provide useful information about $\tau_C(\delta)$. We note that the regions $\tilde{E}_C$ are not disjoint. Supposing we have observations $x_1,\dots,x_n$, we obtain an empirical estimate of $\text{pr}(X\in\tilde{E}_{C})$ using
\begin{align}
\frac{1}{n}  \sum\limits_{j=1}^{n} \frac{\mathbbm{1}_{\{x_j\in\tilde{E}_C\}}}{\sum_{C'\in 2^D\setminus\emptyset}\mathbbm{1}_{\{x_j\in \tilde{E}_{C'} \}}},
\label{eqn:weightedProb}
\end{align}
so that the contribution of each observation sums to one. Combining~\eqref{eqn:HRV}~and~\eqref{eqn:weightedProb}, we estimate
\begin{align}
	\text{pr}\left(X_\wedge^C >q, X\in\tilde{E}_C\right) = \text{pr}(X_\wedge^C >q\mid X\in\tilde{E}_C)\text{pr}(X\in\tilde{E}_C),~~~~~\text{$C \in 2^{D}\setminus\emptyset$,}
\label{eqn:mainResult2}
\end{align}  
for some large $q$. To estimate the proportion of extremal mass associated with each cone $\mathbb{E}_C$, we consider probability~\eqref{eqn:mainResult2} for a given $\tilde{E}_C$ divided by the sum over all such probabilities, corresponding to $\tilde{E}_{C'}$, $C'\in 2^D\setminus\emptyset$. As in Method~1, the result is evaluated at a high threshold $q$, and mass estimated to be below the threshold $\pi$ is removed.

\section{Simulation study}\label{sec:simulation}
\subsection{Overview and metrics}\label{subsec:simulationIntro}
We present simulations to demonstrate Methods~1~and~2, and compare them with the approach of \cite{Goix2015}. Here, we consider a max-mixture distribution involving Gaussian and extreme~value logistic distributions, described in equation~\eqref{eqn:mixtureDistribution}. In the Supplementary Material we present results for a special case of this, the asymmetric logistic distribution \citep{Tawn1990}, that is used by \cite{Goix2015} to assess the performance of their methods. The key difference between these two distributions is that the Gaussian components in the max-mixture model lead to sub-asymptotic dependence, in contrast to independence, on certain sub-cones. Hence, distinguishing between cones $\mathbb{E}_C$ with and without limiting mass is a more difficult task for our max-mixture distribution. For the classes of model we consider, it is possible to calculate the proportion of extremal mass on the various cones analytically, allowing us to compare our estimates to the true distribution of mass using the Hellinger distance.

When incorporating a cut-off for sparse representation of the measure $\mu$, as mentioned in Section~\ref{subsec:method1}, the methods can be viewed as classification techniques. Plotting receiver operating characteristic curves is a common method for testing the efficacy of classifiers \citep{Hastie2009}. To obtain such curves, the false positive rate of a method is plotted against the true positive rate, as some parameter of the method varies. In our case, the false positive rate is the proportion of cones $\mathbb{E}_C$ incorrectly detected as having mass, while the true positive rate is the proportion of correctly detected cones. To obtain our curves, we vary the threshold, $\pi$, above which estimated mass is considered non-negligible. For $\pi=0$, all cones will be included in the estimated dependence structure, leading to the true and false positive rates both being 1, while $\pi=1$ includes none of the cones, so both equal 0. A perfect result for a given data set and method would be a false positive rate of 0 and true positive rate of 1: the closer the curve is to the point $(0,1)$, the better the method. This is often quantified using the area under the curve, with values closer to 1 corresponding to better methods.

Let $p = \left(p_{C}; C\in 2^{D}\setminus\emptyset\right)$ denote the true proportion of mass on each cone, and denote its estimate by $\hat{p}$. The Hellinger distance between $p$ and $\hat{p}$,
\begin{align}
	\HD(p,\hat{p}) = \frac{1}{\surd 2}\left\{\sum_{C\in2^{D}\setminus\emptyset} \Big(p_{C}^{1/2} - \hat{p}_{C}^{1/2}\Big)^{2}\right\}^{1/2},
\label{eqn:hellinger}
\end{align}
is used to determine the precision of the estimated proportions. In particular, $\HD(p,\hat{p})\in [0,1]$, and equals 0 if and only if $p = \hat{p}$. The closer $\HD(p,\hat{p})$ is to 0, the better $p$ is estimated by $\hat{p}$. Errors on small proportions are penalized more heavily than errors on large proportions. A small positive mass on a region, estimated as zero, will incur a relatively heavy penalty.

\subsection{Max-mixture distribution}\label{subsec:maxMixture}
\cite{Segers2012} shows how to construct distributions that place extremal mass on different combinations of cones. Here, we take a different approach by considering max-mixture models with asymptotic and sub-asymptotic dependence in different cones. This can be achieved by using a mixture of extreme value logistic and multivariate Gaussian copulas, a particular example of which we consider here.

Let $Z_{C}=\big( Z_{i,C}:i\in C \big)$ be a $|C|$-dimensional random vector with standard Fr\'{e}chet marginal distributions, and $\{Z_{C}:C\in 2^{D}\setminus\emptyset\}$ be independent random vectors. Define the vector $X=(X_{1},\dots,X_{d})$ with components
\begin{align}
	X_{i} = \max_{C\in 2^{D}\setminus\emptyset:i\in C} \left(\theta_{i,C}Z_{i,C}\right), ~~~~ \theta_{i,C}\in[0,1],~~~~\sum\limits_{C\in 2^{D}\setminus\emptyset:i\in C}\theta_{i,C}=1,
\label{eqn:mixtureDistribution}
\end{align}
for $i=1,\dots,d$. The constraints on $\theta_{i,C}$ ensure that $X$ also has standard Fr\'{e}chet margins. The random vector $Z_{C}$ may exhibit asymptotic dependence, in which case mass will be placed on the cone $\mathbb{E}_C$, or it may exhibit asymptotic independence, in which case mass will be placed on the cones $\mathbb{E}_i$, $i\in C$.

Here, we consider one particular five-dimensional example. We define $Z_{1,2}$ and $Z_{4,5}$ to have bivariate Gaussian copulas with correlation parameter $\rho$, and $Z_{1,2,3}$, $Z_{3,4,5}$ and $Z_{1,2,3,4,5}$ to have three-dimensional and five-dimensional extreme value logistic copulas with dependence parameter $\alpha$. The bivariate Gaussian distribution is asymptotically independent with sub-asymptotic dependence, while the logistic distribution is asymptotically dependent for $\alpha \in (0,1)$. As such, the cones with mass resulting from this construction are $\mathbb{E}_1, \mathbb{E}_2, \mathbb{E}_4, \mathbb{E}_5, \mathbb{E}_{1,2,3}, \mathbb{E}_{3,4,5}$ and $\mathbb{E}_{1,2,3,4,5}$. The Gaussian components mean that cones $\mathbb{E}_{1,2}$ and $\mathbb{E}_{4,5}$ have no mass asymptotically, but the parameter $\rho$ controls the decay rate of the mass. We assign equal mass to each of the seven charged cones by setting
\begin{align*}
	\theta_{1,2} =&\left(5,5\right)/7,~~\theta_{4,5}=\left(5,5\right)/7,\\
	\theta_{1,2,3}=\left(1,1,3\right)/7,~~\theta_{3,4,5}&=\left(3,1,1\right)/7,~~\theta_{1,2,3,4,5}=\left(1,1,1,1,1\right)/7.
\end{align*}
In this model, the cones with mass are fixed, in contrast to the asymmetric logistic examples in the Supplementary Material, where following \cite{Goix2015}, they are chosen at random over different simulation runs. Setting $\rho=0$ in this max-mixture distribution gives an asymmetric logistic model.

\begin{figure}[!htbp]
\begin{center}
\includegraphics[width=\textwidth]{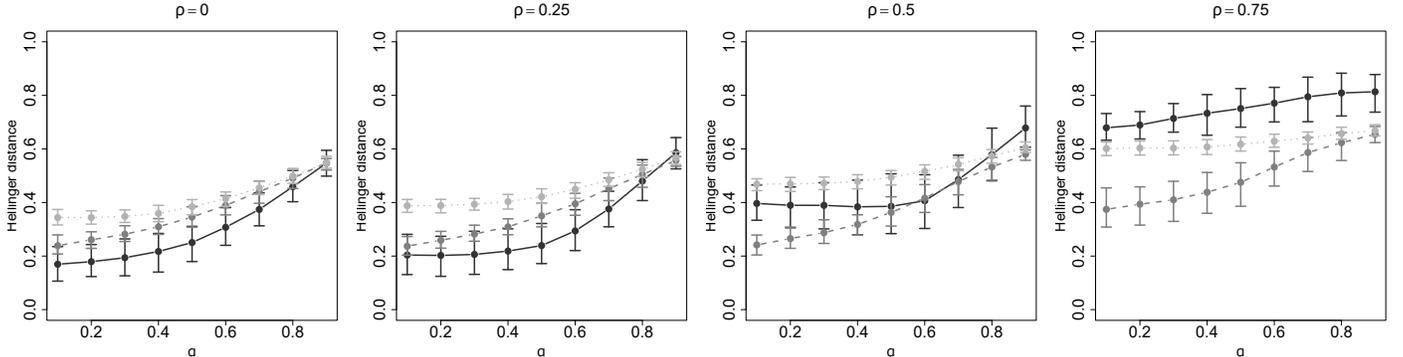}
\caption{Mean Hellinger distance, 0.05 and 0.95 quantiles over 100 simulations. Method~1: solid lines; Method~2: dashed lines; \citeauthor{Goix2015}: dotted lines.}
\label{fig:mixtureSimulations5D}
\end{center}
\end{figure}

Each setting in the simulation study is repeated 100 times, taking samples of size 10,000. In Method~1, we set $p=0.5$, $u_C$ to be the 0.75 quantile of observed $Q$ values in region $E_C$ for each $C\in 2^D\setminus\emptyset$, and the value of $q$ for which we estimate $\text{pr}(X^*\in E_{C}\mid Q>q)$ to be the 0.9999 quantile of all observed $Q$ values. In Method~2, we set $\delta=0.5$, each threshold $u_{C}$ to be the 0.85 quantile of observed $X_\wedge^C$ values in region $\tilde{E}_C$, and the extrapolation level $q$ to be the 0.9999 quantile of observed values of $X$. The parameters in the method of \cite{Goix2015} are chosen to be $(\epsilon,k) = (0.1,n^{1/2})$, using notation from that paper. When calculating the Hellinger distances, we used $\pi=0.001$ as the value above which estimated mass is considered significant in all three methods. The tuning parameters are not optimized for individual data sets, but fixed at values that we have found to work well across a range of settings. In Section~\ref{subsec:parameterStability}, we discuss stability plots, which could be used as a guide as to which tuning parameter values may be sensible for a given set of data. In Section~\ref{sec:data}, we consider how the estimated extremal dependence structure changes as the tuning parameters vary for a particular data set, allowing us to further examine this mass stability and choose a reasonable value of $p$ in Method~1 or $\delta$ in Method~2.

\begin{table}[!htbp]
\resizebox{\textwidth}{!}{%
\centering
\begin{tabular}{|c|ccc|ccc|ccc|ccc|}
\hline
&\multicolumn{3}{c|}{$\rho=0$}&\multicolumn{3}{c|}{$\rho=0.25$}&\multicolumn{3}{c|}{$\rho=0.5$}&\multicolumn{3}{c|}{$\rho=0.75$}\\
$\alpha$& 0.25 & 0.5 & 0.75 & 0.25 & 0.5 & 0.75 & 0.25 & 0.5 & 0.75 & 0.25 & 0.5 & 0.75 \\
\hline
\citeauthor{Goix2015} & 100 (0.0) & 100 (0.0) & 98.0 (1.1) & 99.7 (0.4) & 99.8 (0.4) & 96.3 (1.4) & 92.3 (0.6) & 91.9 (0.5) & 90.1 (1.2) &  91.0 (1.0) & 90.1 (1.7) & 87.6 (1.2)\\
Method~1 			  & 100 (0.0) & 100 (0.1) & 97.7 (1.4) & 100 (0.1) & 99.9 (0.3) & 96.7 (1.2) & 97.3 (1.6) & 96.3 (1.9) & 91.5 (1.9) & 92.9 (1.0) & 90.0 (0.9) & 87.5 (0.2)\\
Method~2  			  & 100 (0.0) & 99.2 (0.7) & 96.0 (1.6) & 100 (0.1) & 98.9 (0.8) & 94.6 (1.8) & 99.5 (0.6) & 97.5 (1.1) & 92.7 (1.7) & 94.4 (1.9) & 92.9 (1.9) & 89.1 (2.0)\\
\hline 
\end{tabular}}
 \caption{Average area under the receiver operating characteristic curves, given as percentages, for 100 samples from a five-dimensional mixture of bivariate Gaussian and extreme value logistic distributions; the standard deviation of each result is given in brackets.}
   \label{table:AUCmix}
\end{table}

\begin{figure}[!htbp]
\begin{center}
\includegraphics[width=\textwidth]{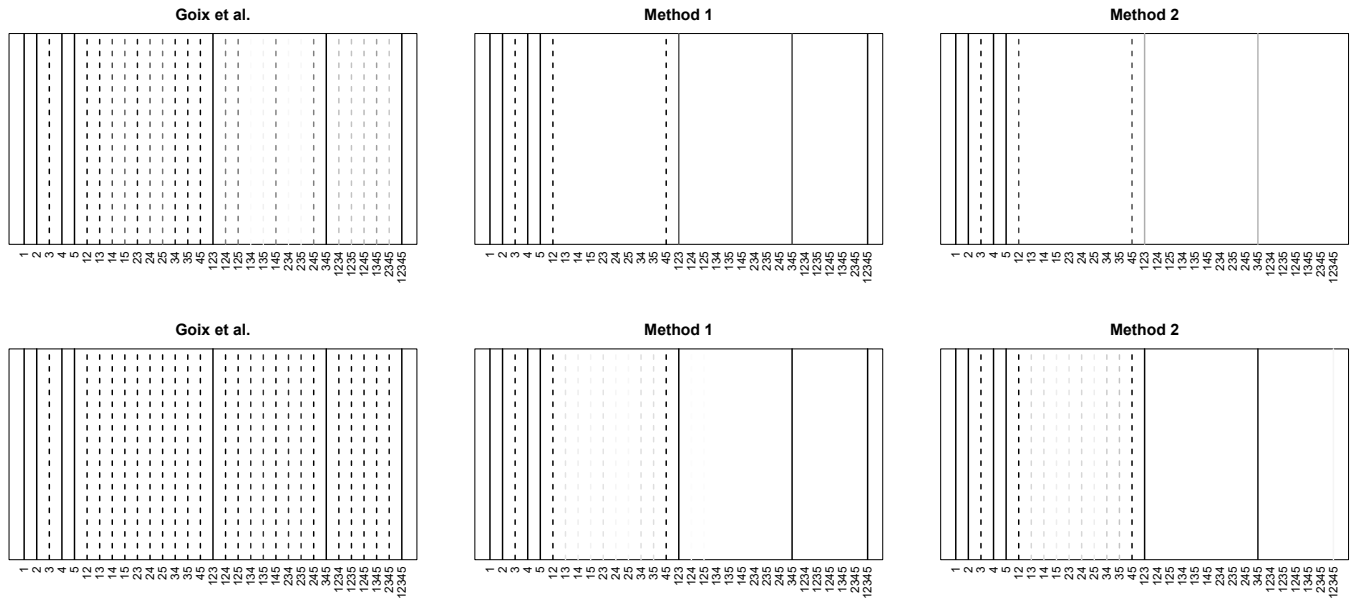}
\caption{Plots to show the number of times each cone is assigned mass greater than $\pi=0.01$ (top) and $\pi=0.001$ (bottom), for $(\alpha,\rho)=(0.75,0.5)$. Darker lines correspond to higher detection rates over 100 simulations. True cones with mass: solid lines; cones without mass; dashed lines.}
\label{fig:maxmixFaces}
\end{center}
\end{figure}

In Fig.~\ref{fig:mixtureSimulations5D}, we show the mean Hellinger distance achieved by each method for $\rho\in \{0,0.25,0.5,0.75\}$ and $\alpha \in [0.1,0.9]$. Results for the area under the receiver operating characteristic curves are provided in Table~\ref{table:AUCmix}. The performance of all three methods deteriorates as the value of the correlation parameter $\rho$, or the dependence parameter $\alpha$, increases. In the former case this is due to the stronger sub-asymptotic dependence on cones without extremal mass; in the latter case, larger values of $\alpha$ in logistic component $Z_C$ lead to larger values of $\tau_{\underline{C}}(\delta)$ for $\underline{C}\subset C$, so it is harder to determine which cones $\mathbb{E}_C$ truly contain extremal mass. In terms of the Hellinger distance, Method~1 is the most successful for $\rho=0,0.25$, although its performance deteriorates when there is stronger correlation in the Gaussian components. Method~2 yields the best results for $\rho=0.5,0.75$. In terms of estimating the proportion of extremal mass associated with each cone $\mathbb{E}_C$, at least one of our proposed methods is always more successful than \citeauthor{Goix2015}\ for this max-mixture model. The results in Table~\ref{table:AUCmix} reveal that all three methods are successful classifiers for low values of $\rho$ and $\alpha$. For $\alpha=0.75$ and $\rho=0,0.25$, Method~1 and the approach of \citeauthor{Goix2015}\ demonstrate similarly strong performance, while for $\rho=0.5,0.75$, Method~2 again provides the best results. 

As a further comparison of the methods, in Fig.~\ref{fig:maxmixFaces}, we investigate how often each cone $\mathbb{E}_C$ is detected as having mass above $\pi=0.01,0.001$ for the $(\alpha,\rho)=(0.75,0.5)$ case. For $\pi=0.001$, the approach of \citeauthor{Goix2015}\ places mass on around three times as many cones as Methods~1~and~2, and over twice as many for the $\pi=0.01$ case, so our methods provide sparser representations of the extremal mass that are both much closer to the truth. The reason for this difference is explained by the method of \citeauthor{Goix2015}\ assuming there is extremal mass on a cone $\mathbb{E}_C$ if $\text{pr}(X\in E_C\mid R>r_0)>\pi$, whereas we recognize that when $\hat\tau_C<1$ or $\hat\tau_C(\delta)<1$, non-limit mass can be on a cone at a finite threshold, but may progressively decrease to zero as the level of extremity of the vector variable is increased to infinity. When $\hat\tau_C=1$ or $\hat\tau_C(\delta)=1$, we estimate mass on cone $\mathbb{E}_C$ similarly to \citeauthor{Goix2015} As a consequence, our approach integrates information over the entire tail to estimate which cones have limit mass, as opposed to \citeauthor{Goix2015}, who use information only at a single quantile. We also observe from Fig.~\ref{fig:maxmixFaces} that Method~2 often fails to detect the cone corresponding to all five variables being large simultaneously, and places more mass on lower-dimensional cones, arising from the estimated values of $\tau_C(\delta)$. Method~1 also places mass on these lower-dimensional cones, but more often detects the true higher-dimensional cones with mass.

\subsection{Stability plots}\label{subsec:parameterStability}
One way to decide on reasonable tuning parameter values for a given set of data is via a parameter stability plot. Here, we outline how to construct such a plot for an example using the max-mixture distribution of Section~\ref{subsec:maxMixture} with Method~2, where our aim is to obtain a sensible range of values for the tuning parameter $\delta$, by considering the region of $\delta$ where the number of cones determined as having mass is stable.

\begin{figure}[!htbp]
\begin{center}
\includegraphics[width=0.8\textwidth]{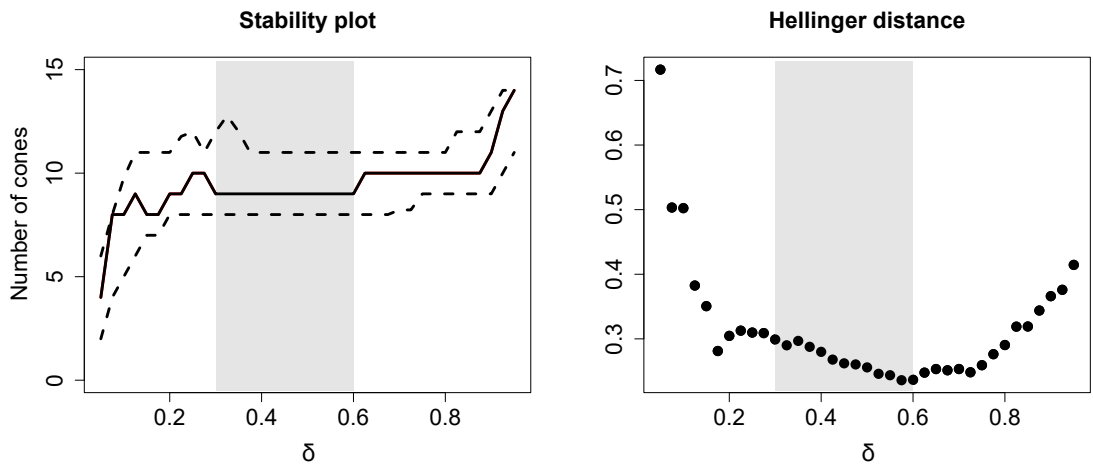}
\caption{Stability plot (left) for Method~2, with dashed lines showing a 95\% bootstrapped confidence interval for the number of cones $\mathbb{E}_C$ with mass, and a plot of the Hellinger distance (right) for each value of $\delta$. The shaded regions correspond to the stable range of tuning parameter values. Data were simulated from the max-mixture distribution of Section~\ref{subsec:maxMixture} with $n=10,000$, $\alpha=0.25$ and $\rho=0.25$.}
\label{fig:parameterStability5D}
\end{center}
\end{figure}

For $\delta\in\{0.05,0.075,\dots,0.95\}$, we use Method~2 to estimate the proportion of extremal mass on each cone, and find the number of cones whose estimated mass is greater than $\pi = 0.001$ in each case. The remaining parameters are fixed as in Section~\ref{subsec:maxMixture}. Figure~\ref{fig:parameterStability5D} shows the estimates of the number of cones, with a 95\% confidence interval constructed from 250 bootstrapped samples: these constitute our stability plot. Analogous plots can be created to choose $p$ in Method~1, or in each case to choose $\pi$. In practice, the choice of threshold $\pi$ should depend on the dimension of the data; this is not explored here.

The number of cones detected as having mass is most stable for values of $\delta$ between 0.3 and 0.6, indicated by the shaded regions in Fig.~\ref{fig:parameterStability5D}, suggesting values of $\delta$ in this range may be appropriate for this sample. The right-hand panel of Fig.~\ref{fig:parameterStability5D} shows the Hellinger distance corresponding to the set of estimated proportions obtained for each value of $\delta$. For this particular sample, although values of $\delta$ within the stable range slightly overestimate the number of cones with mass, the smallest Hellinger distance occurs for a value of $\delta$ within the stable range, and the Hellinger distance is reasonably consistent across these tuning parameter values. In practice, the true proportions on each cone $\mathbb{E}_C$ are unknown, so Hellinger plots cannot be constructed; the plot here supports the idea of using stability plots in choosing suitable tuning parameter values. There is no guarantee that stability plots will find the optimal tuning parameter values, but they do offer some insight into tuning parameter optimization. Consideration of the context of the problem may be useful in determining whether it is reasonable for extremal mass to be placed on particular combinations of cones, and such insight could facilitate the choice of different $p$ or $\delta$ values for different cones $\mathbb{E}_C$.

\section{River flow data}\label{sec:data}
We apply Methods~1~and~2 to daily mean river flow readings from 1980 to 2013, measured in cubic metres per second, at five gauging stations in the North West of England. These data are available from the Centre for Ecology and Hydrology at \texttt{nrfa.ceh.ac.uk} (\citeauthor{Morris1990}, \citeyear{Morris1990}, \citeyear{Morris1994}). Estimates of the extremal dependence structure of the flows could be used to aid model selection, or one could carry out density estimation on each cone $\mathbb{E}_C$ to give an overall model. The locations of the five gauges are shown in Fig.~\ref{fig:riverGaugeMaps}; the labels assigned to each location will be used to describe the dependence structures estimated in this section. Figure~\ref{fig:riverGaugeMaps} also illustrates the boundaries of the catchments associated with each gauge. These catchments demonstrate the areas from which surface water, usually as a result of precipitation, will drain to each gauge. The spatial dependence of river flow is studied by \cite{Keef2013} and \cite{Asadi2015}. As high river flow is mainly caused by heavy rainfall, we may observe extreme river flow readings at several locations simultaneously if they are affected by the same extreme weather event. Gauges with adjacent or overlapping catchments are expected to take their largest values simultaneously, with stronger dependence between gauges that are closer together.

\begin{figure}[!htbp]
\begin{center}
\includegraphics[width=0.7\textwidth]{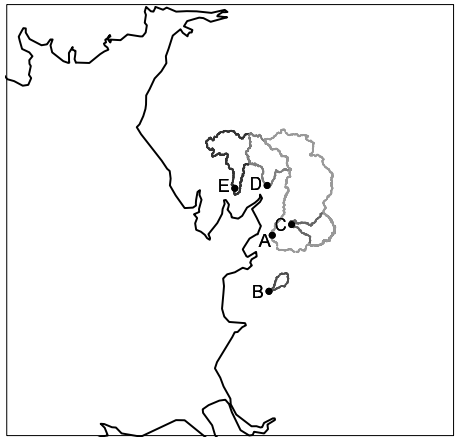}
\caption{Locations of the river flow gauges, labelled A to E, and corresponding catchment boundaries.}
\label{fig:riverGaugeMaps}
\end{center}
\end{figure}

Table~\ref{table:riverLocations1} shows the percentage of the extremal mass assigned to each cone for tuning parameter values $p\in\{0.7,0.725,\dots,0.975\}$ and $\delta\in\{0.2,0.25,\dots,0.75\}$. We set $\pi=0.01$ to be the threshold below which the proportion of mass is deemed negligible, and the extrapolation levels $q$ to be the 0.999 quantile of the observed $Q$ and $X$ values in Methods~1~and~2, respectively. Remaining parameters are fixed as in Section~\ref{subsec:maxMixture}. By observing how the estimated dependence structure changes over a range of tuning parameter values, we aim to find a `stable region' in which the results are most reliable. A further consideration is whether the tuning parameters give a feasible estimate of the extremal dependence structure. In particular, each variable should be represented on at least one cone, and moment constraint~\eqref{eqn:momentConstraint} should be taken into account. For Method~2, Table~\ref{table:riverLocations1} indicates that for $\delta\geq 0.45$, the cone corresponding to location E is assigned more than 20\% of the extremal mass, which is not possible due to the moment constraint. Feasible stable regions are demonstrated by the shaded regions in Table~\ref{table:riverLocations1}. For Method~2, one could also look for a value of $\delta$ that give estimates of $\tau_C(\delta)$ satisfying $\max_{C:C\supseteq i} \hat{\tau}_C(\delta)=1$, subject to estimation uncertainty, for every $i=1,\dots,d$.

\begin{table}[!htbp]
\begin{centering}
\resizebox{0.48\textwidth}{!}{%
\begin{tabular}{|c|ccccccccccccc|}
\hline
$p$ & B & C & D & E & AC & AD & BC & ABC & ACD & ADE & ABCD & ACDE & ABCDE \\  
\hline
  0.700 & 2   &  &  &  &  &  &  &  &  &  & 3   &  & 95   \\ 
  0.725 & 2   &  &  &  &  &  &  &  &  &  & 4   &  & 93   \\ 
  0.750 & 1   &  &  &  &  &  &  &  &  &  & 4   &  & 95   \\ 
  0.775 & 1   &  &  &  &  &  &  &  &  &  & 11   &  & 88   \\ 
  0.800 & 7   &  &  &  &  &  &  &  &  &  & 11   &  & 82   \\ 
  \rowcolor{grey}0.825 & 10   &  &  & 3   &  &  &  &  &  &  & 16   &  & 71   \\ 
  \rowcolor{grey}0.850 & 13   &  &  & 7   &  &  &  &  &  &  & 17   &  & 63   \\ 
  \rowcolor{grey}0.875 & 12   &  &  & 12   &  &  &  &  &  &  & 19   & 1   & 55   \\ 
  0.900 & 8   &  &  & 31   &  &  &  &  &  &  & 15   &  & 46   \\ 
  0.925 & 6   &  &  & 57   &  &  &  & 2   & 2   &  & 15   &  & 19   \\ 
  0.950 & 5   &  &  & 52   &  &  &  & 4   & 2   &  & 16   &  & 21   \\ 
  0.975 & 14   & 1   &  & 73   &  & 2   & 1   &  &  &  &  & 2   & 6   \\ 
  \hline
\end{tabular}}
\quad
\resizebox{0.48\textwidth}{!}{%
\begin{tabular}{|c|ccccccccccccc|}
\hline
$\delta$ & B & C & D & E & AC & AD & BC & ABC & ACD & ADE & ABCD & ACDE & ABCDE \\ \hline
  0.20 &  &  &  &  &  &  &  &  &  &  & 1 &  & 99 \\ 
  0.25 & 4 &  &  &  &  &  &  &  &  &  & 15 &  & 81 \\ 
  0.30 & 3 &  & 1 &  &  &  &  & 7 &  &  & 24 &  & 65 \\ 
  \rowcolor{grey}0.35 & 8 &  &  & 1 &  &  &  & 9 &  &  & 33 &  & 49 \\ 
  \rowcolor{grey}0.40 & 14 &  &  & 3 &  &  &  & 2 &  &  & 49 &  & 30 \\ 
  0.45 & 27 &  & 1 & 26 &  &  & 1 & 4 &  &  & 29 &  & 12 \\ 
  0.50 & 22 &  & 1 & 49 &  &  & 1 & 2 &  &  & 15 & 2 & 7 \\ 
  0.55 & 16 &  &  & 52 &  & 1 & 1 & 3 &  &  & 18 & 4 & 6 \\ 
  0.60 & 13 &  &  & 55 &  &  &  & 2 & 2 &  & 19 & 5 & 4 \\ 
  0.65 & 16 &  & 3 & 48 &  & 2 & 2 & 3 &  &  & 17 & 6 & 3 \\ 
  0.70 & 16 & 2 & 4 & 45 &  & 2 & 3 & 3 & 4 &  & 13 & 6 & 3 \\ 
  0.75 & 19 & 3 & 4 & 40 & 1 & 2 & 3 & 5 & 4 & 1 & 11 & 4 & 2 \\ 
\hline
\end{tabular}}
   \caption{The percentage of mass assigned to each sub-cone for varying values of the tuning parameters in Method~1 (left) and Method~2 (right). The grey regions demonstrate the feasible stable ranges.}
   \label{table:riverLocations1}
\end{centering}
\end{table}

Focusing on tuning parameter values within each of the stable regions in Table~\ref{table:riverLocations1}, Method~1 suggests the dependence structure to be \{B, E, ABCD, ABCDE\}, while Method~2 suggests \{B, E, ABC, ABCD, ABCDE\}. All the cones detected by Method~2 are either also detected by Method~1, or are neighbours of cones detected by Method~1, showing there is some agreement between the methods. If we had used a higher threshold for the negligible mass, say $\pi=0.1$, for tuning parameter values in the stable region, both methods would have detected the structure \{B, ABCD, ABCDE\}. We also investigated the behaviour of the methods using the 0.99 and 0.9999 quantiles for extrapolation level $q$. For both methods, the set of cones estimated as having mass was stable, but for the lower quantile Method~2 placed less mass on ABCDE, and there was more mass assigned to this cone at the higher quantile.

The subsets of locations detected as having simultaneously high river flows seem feasible when considering the geographic positions of the gauging stations. For instance, both methods suggest mass on ABCD; as station E lies towards the edge of the region under consideration, it is possible for weather events to affect only the other four locations. Both methods also suggest that locations B and E can experience high river flows individually; this seems reasonable as they lie at the edge of region we consider. The catchment of gauge C lies entirely within the catchment of gauge A. We observe that location C occurs with location A in the subsets of sites determined to take their largest values simultaneously, which may be a consequence of this nested structure.

\begin{table}[!htbp]
\centering
\resizebox{0.4\textwidth}{!}{%
\begin{tabular}{|c|cccccccccc|}
\hline
$p$ & A & B & C & D & AC & AD & BC & ABC & ACD & ABCD \\  
 \hline
  0.700 &  &  &  &  &  &  &  &  &  & 100 \\ 
  0.725 &  &  &  &  &  &  &  &  &  & 100 \\ 
  0.750 &  &  &  &  &  &  &  &  &  & 100 \\ 
  0.775 &  &  &  &  &  &  &  &  &  & 100 \\ 
  0.800 &  &  &  &  &  &  &  &  &  & 100 \\ 
  \rowcolor{grey}0.825 &  & 3 &  &  &  &  &  &  &  & 97 \\ 
  \rowcolor{grey}0.850 &  & 3 &  &  &  &  &  &  &  & 97 \\ 
  \rowcolor{grey}0.875 &  & 4 &  &  &  &  &  &  &  & 96 \\ 
  0.900 &  & 2 &  &  &  &  &  &  &  & 98 \\ 
  0.925 &  & 5 &  &  &  &  &  & 1 &  & 94 \\ 
  0.950 &  & 7 &  &  &  & 2 &  & 5 & 2 & 84 \\ 
  0.975 &  & 35 & 3 &  &  & 1 & 2 &  & 7 & 52 \\ 
 \hline
\end{tabular}}
\quad
\resizebox{0.4\textwidth}{!}{%
\begin{tabular}{|c|cccccccccc|}
\hline
$\delta$ & A & B & C & D & AC & AD & BC & ABC & ACD & ABCD \\ 
 \hline
  0.20 &  &  &  &  &  &  &  &  &  & 100 \\ 
  0.25 &  &  &  &  &  &  &  &  &  & 100 \\ 
  0.30 &  & 2 &  &  &  &  &  &  &  & 98 \\ 
  \rowcolor{grey}0.35 &  & 4 &  &  &  &  &  &  &  & 96 \\ 
  \rowcolor{grey}0.40 &  & 9 &  &  &  &  &  &  &  & 91 \\ 
  0.45 &  & 23 &  & 1 &  &  &  & 2 & 1 & 72 \\ 
  0.50 &  & 28 &  & 2 &  & 1 & 1 & 1 & 6 & 60 \\ 
  0.55 &  & 19 & 1 & 4 &  & 2 & 2 & 3 & 14 & 55 \\ 
  0.60 &  & 22 & 1 & 4 &  & 5 & 2 & 4 & 17 & 45 \\ 
  0.65 &  & 26 & 2 & 5 &  & 6 & 3 & 5 & 18 & 36 \\ 
  0.70 &  & 26 & 3 & 6 &  & 5 & 4 & 6 & 20 & 30 \\ 
  0.75 & 1 & 28 & 4 & 7 & 1 & 6 & 6 & 9 & 17 & 20 \\ 
\hline
\end{tabular}}
 \caption{Estimated percentage of extremal mass on to each sub-cone when considering locations A-D,  for varying values of the tuning parameters in Method~1 (left) and Method~2 (right).}
    \label{table:riverLocations4D}
\end{table}

To assess whether our methods are self-consistent across different dimensions, Table~\ref{table:riverLocations4D} shows similar results for locations A, B, C and D. We would expect the subsets of locations deemed to be simultaneously large to be the same as in Table~\ref{table:riverLocations1} if we ignore location E. Considering the same tuning parameter values as for Table~\ref{table:riverLocations1}, we see that the extremal dependence structures are estimated to be \{B, ABCD\} for both methods. For Method~1, this is the set of cones we would expect based on the five-dimensional results. For Method~2, we would also expect to detect the cone labelled ABC, although this was only assigned a relatively small proportion of the mass in the five-dimensional case.

Tables~\ref{table:riverLocations1}~and~\ref{table:riverLocations4D} demonstrate the importance of tuning parameter selection in Methods~1~and~2. As $p$ or $\delta$ increase, we are more likely to detect mass on the one-dimensional cones, or cones corresponding to subsets of the variables with low cardinality. Likewise, for low values of $p$ or $\delta$, we assign more extremal mass to the cone representing all variables being simultaneously extreme. In practice, we should consider the feasibility of the detected dependence structures, as well as the stability of the regions determined to have extremal mass as $p$ or $\delta$ vary. Our methods could be used to impose structure in more complete models for multivariate extremes. Even if a handful of different options look plausible with some variation in $p$ or $\delta$, this is still a huge reduction over the full set of possibilities.

\section*{Acknowledgement}
We gratefully acknowledge the support of the Engineering and Physical Sciences Research Council, through the EP/L015692/1 STOR-i centre for doctoral training, and fellowship EP/P002838/1. We acknowledge the National River Flow Archive Centre for Ecology and Hydrology for use of the river flow and catchment boundary data. We thank the referees and associate editor for their comments.

\section*{Supplementary Material}
Supplementary Material includes calculations of $\tau_C(\delta)$ for copulas in Table~\ref{table:tauExamples}, simulation results for estimates of $\tau_C(\delta)$ in Method~2, plots of simulation results for the max-mixture distribution of Section~\ref{subsec:maxMixture}, and additional simulation results for the asymmetric logistic model.

\appendix
\section*{Appendix}
\subsection*{Calculation of $\tau_{C}(\delta)$ for a bivariate extreme value distribution}\label{app:tau}
We determine the value of $\tau_C(\delta)$, defined in~\eqref{eqn:RVassumption1}, by establishing the index of regular variation of 
\begin{align*}
\text{pr}\left(X_i>t,i\in C; X_j<t^\delta,j\in D\setminus C\right).
\end{align*}
Here, we calculate $\tau_{1}(\delta)$, $\tau_{2}(\delta)$ and $\tau_{1,2}$ for a bivariate extreme value distribution, with distribution function given in~\eqref{eqn:bveDist}. The exponent measure $V$ can be written as
\begin{align*}
	V(x,y) &= \frac{2}{y}\int_{0}^{1}(1-w)dH(w) - \frac{2}{y}\int_{\frac{x}{x+y}}^{1}(1-w)h(w)dw + \frac{2}{x}\int_{\frac{x}{x+y}}^{1}wh(w)dw + \frac{2\theta_{1}}{x}.
\end{align*}

To study $\tau_{1}(\delta)$, suppose that $h(w)\sim c_{1}(1-w)^{s_{1}}$ as $w\rightarrow 1$, for $s_{1}>-1$. For $x\rightarrow\infty$ and $y=o(x)$, applying Karamata's theorem (\citeauthor{Resnick2010}, \citeyear{Resnick2010}, Theorem~2.1), we have
\begin{align*}
	V(x,y) &= \frac{1}{y} - \frac{2c_{1}}{y(s_{1}+2)}\left(\frac{y}{x+y}\right)^{s_{1}+2} \{1+o(1)\}+ \frac{2c_{1}}{x(s_{1}+1)}\left(\frac{y}{x+y}\right)^{s_{1}+1} \{1+o(1)\} + \frac{2\theta_{1}}{x}\\
		&= \frac{1}{y} + 2c_{1}\left(\frac{y}{x+y}\right)^{s_{1}+1}\left\{\frac{1}{x(s_{1}+1)}-\frac{1}{(s_{1}+2)(x+y)}\right\}  \{1+o(1)\}+ \frac{2\theta_{1}}{x}\\
		&=\frac{1}{y} + \frac{2c_{1}y^{s_{1}+1}x^{-(s_{1}+2)}}{(s_{1}+1)(s_{1}+2)}\{1+o(1)\}+ \frac{2\theta_{1}}{x}.
\end{align*}
By this result,
\begin{align*}
\text{pr}&\left(X_1>t,X_2<t^\delta\right) = \text{pr}\left(X_2<t^\delta\right) - \text{pr}\left(X_1<t,X_2<t^\delta\right)=\exp\left(-t^{-\delta}\right) - \exp\left\{-V\left(t,t^\delta\right)\right\}\nonumber\\
	&=\exp\left(-t^{-\delta}\right) - \exp\left\{-\frac{1}{t^\delta} - \frac{2c_{1}t^{\delta(s_{1}+1)}t^{-(s_{1}+2)}}{(s_{1}+1)(s_{1}+2)}\{1+o(1)\}- \frac{2\theta_{1}}{t}\right\}\nonumber\\
	&=\left\{1-t^{-\delta}+o\left(t^{-\delta}\right)\right\}\\
	&~~~~~~~\cdot\left(1 - \left[1- \frac{2c_{1}t^{\delta(s_{1}+1)}t^{-(s_{1}+2)}}{(s_{1}+1)(s_{1}+2)} + o\left\{t^{\delta(s_{1}+1)-(s_1+2)}\right\}\right]\left\{1-2\theta_1t^{-1}+o\left(t^{-1}\right)\right\}\right)\nonumber\\
	&=\left\{2\theta_1t^{-1} + \frac{2c_{1}t^{\delta(s_{1}+1)-(s_{1}+2)}}{(s_{1}+1)(s_{1}+2)}\right\}\left\{1+o(1)\right\}.
\end{align*}
If $\theta_{1}>0$, i.e., the spectral measure places mass on $\{1\}$, we see that $\text{pr}\left(X_1>t,X_2<t^\delta\right)\sim 2\theta_1t^{-1}$ as $t\rightarrow\infty$, hence $\tau_{1}(\delta)=1$ for all $\delta\in[0,1]$. If $\theta_{1}=0$, we have $\tau_{1}(\delta)=\{(s_{1}+2)-\delta(s_{1}+1)\}^{-1}$, which increases from $(s_1+2)^{-1}$ at $\delta=0$ to 1 at $\delta=1$. By similar calculations, if $h(w)\sim c_{2}w^{s_{2}}$ as $w\rightarrow 0$ for $s_{2}>-1$, we have $\tau_{2}(\delta)=1$ if $\theta_{2}>0$, and $\tau_{2}(\delta)=\{(s_{2}+2)-\delta(s_2+1)\}^{-1}$ otherwise. Since $\tau_{1,2}=\eta_{1,2}$, we have $\tau_{1,2}=1$ if $\theta_1+\theta_2<1$, and $\tau_{1,2}=1/2$ if $\theta_1+\theta_2=1$.

\newpage
\chapter{}
\begin{center}
{\bf\Large Supplementary Material}
\end{center}

\appendix
\setcounter{section}{0}

\section{Calculation of $\tau_{C}(\delta)$}
\subsection{Overview}
In the Appendix, we derived $\tau_{1}(\delta)$, $\tau_{2}(\delta)$ and $\tau_{1,2}$ for a particular subclass of bivariate extreme value distribution. Here, we present further calculations of $\tau_{C}(\delta)$ for several trivariate copula models.

In general, there are two cases to consider: $\delta=0$ and $\delta>0$. For many models, the asymptotic relations we study will differ by a constant in these two cases, while the tail index remains the same. For this reason, we focus on $\delta>0$, and present $\delta=0$ calculations separately only when the slowly varying function is no longer a constant, but instead varies with $t$.


\subsection{Independence copula}
We begin by considering the case where all three variables $X_1,X_2,X_3$ are independent. To calculate the value of $\tau_C(\delta)$, we need to determine the index of regular variation of 
\[
	\text{pr}\left(X_i>t, i\in C ; X_j<t^\delta,j\in D\setminus C\right).
\]
In the independence case, this is equivalent to
\[
\text{pr}\left(X_i>t\right)^{|C|} \text{pr}\left(X_i<t^\delta\right)^{|D\setminus C|} = \left(1-e^{-1/t}\right)^{|C|}\left(e^{-1/t^\delta}\right)^{|D\setminus C|}\sim t^{-|C|},
\]
so that $\tau_C(\delta)=1/|C|$, which does not depend on the value of $\delta$. That is, $\tau_1(\delta)=\tau_2(\delta)=\tau_3(\delta)=1$, $\tau_{1,2}(\delta)=\tau_{1,3}(\delta)=\tau_{2,3}(\delta)=1/2$, and $\tau_{1,2,3}=1/3$.

\subsection{Trivariate logistic distribution}
The trivariate extreme value logistic distribution belongs to the class of trivariate extreme value distributions. The exponent measure of the logistic distribution has the form
\begin{align}
	V(x,y,z) = \left( x^{-1/\alpha} + y^{-1/\alpha} + z^{-1/\alpha} \right)^{\alpha},
\label{eqn:logisticExponent}
\end{align}
for $\alpha\in(0,1]$. Since $\alpha=1$ corresponds to the independence case, we restrict our calculations to $\alpha\in(0,1)$. This distribution exhibits asymptotic dependence, with all limiting mass on $\mathbb{E}_{1,2,3}$. Since $\tau_{1,2,3}=\eta_{1,2,3}=1$, our interest lies with the values of $\tau_{C}(\delta)$ for $|C|=1$ and $|C|=2$, and we consider each of these in turn.


\noindent{\bf $|C|=1$: $\tau_1(\delta)$, $\tau_2(\delta)$, $\tau_3(\delta)$.}
In a similar approach to the bivariate case, we calculate $\tau_{1}(\delta)$ by considering
\begin{align*}
	\text{pr}&\left( X_1 > t , X_2<t^\delta , X_3<t^\delta   \right) =\text{pr}\left(X_2<t^\delta,X_3<t^\delta\right)- \text{pr}\left(X_1<t,X_2<t^\delta,X_3<t^\delta\right)\\
	&=\exp\left(-2^{\alpha}t^{-\delta}\right) - \exp\left[- 2^\alpha t^{-\delta}\left\{1+2^{-1}t^{(\delta-1)/\alpha}\right\}^\alpha\right]\\
&=1-2^{\alpha}t^{-\delta}+O(t^{-2\delta}) - \exp\left(- 2^\alpha t^{-\delta}\left[1+2^{-1}\alpha t^{(\delta-1)/\alpha} + O\left\{t^{2(\delta-1)/\alpha}\right\}\right]\right)\\
	&= 2^{\alpha-1}\alpha t^{(\delta-1-\alpha\delta)/\alpha}+O(t^{-2\delta}) +  O\left\{t^{(2\delta-2-\alpha\delta)/\alpha}\right\}\sim 2^{\alpha-1}\alpha t^{(\delta-1-\alpha\delta)/\alpha},
	\end{align*}
yielding $\tau_1(\delta) = \alpha/(1+\alpha\delta-\delta)$. By similar calculations, we have $\tau_2(\delta) = \tau_3(\delta)=\alpha/(1+\alpha\delta-\delta)$, which increase from $\alpha<1$ at $\delta=0$ to 1 at $\delta=1$.


\noindent{\bf $|C|=2$: $\tau_{1,2}(\delta)$, $\tau_{1,3}(\delta)$, $\tau_{2,3}(\delta)$.}
We carry out a similar calculation to find the value of $\tau_{1,2}(\delta)$. Here, we have
\begin{align*}
	\text{pr}&\left( X_1 > t , X_2>t , X_3<t^\delta   \right)\\
	&=\text{pr}\left(X_{3}<t^\delta\right) - \text{pr}\left(X_{1}<t,X_{3}<t^\delta\right) - \text{pr}\left(X_2<t,X_3<t^\delta\right)\\
	&\hspace{2.5cm}+\text{pr}\left(X_1<t,X_2<t,X_3<t^\delta\right)\\
	&=\exp\left(-t^{-\delta}\right) - 2\exp\left\{-\left(t^{-1/\alpha}+t^{-\delta/\alpha}\right)^\alpha\right\}+\exp\left\{-\left(2t^{-1/\alpha}+t^{-\delta/\alpha}\right)^\alpha\right\}\\
	&=\exp\left(-t^{-\delta}\right)\bigg\{1 - 2\exp\left(-t^{-\delta}\left[\alpha t^{(\delta-1)/\alpha} +\frac{1}{2}\alpha(\alpha-1)t^{2(\delta-1)/\alpha} + O\left\{t^{3(\delta-1)/\alpha}\right\}\right]\right) \\
	&\hspace{2.5cm}+ \exp\left(-t^{-\delta}\left[2\alpha t^{(\delta-1)/\alpha} + 2\alpha(\alpha-1)t^{2(\delta-1)/\alpha} +O\left\{t^{3(\delta-1)/\alpha}\right\}\right]\right)\bigg\}
\\
	&=\left\{1-t^{-\delta}+O\left(t^{-2\delta}\right)\right\}\left[\alpha(1-\alpha)t^{(2\delta-2-\alpha\delta)\alpha} + O\left\{t^{(3\delta-3-\alpha\delta)/\alpha}\right\}\right]\\
	&\sim \alpha\left(1-\alpha\right)t^{\left(2\delta-2-\alpha\delta\right)/\alpha}.
\end{align*}
This implies that $\tau_{1,2}(\delta)=\alpha/\left(2+\alpha\delta-2\delta\right)$, which varies from $\alpha/2$ at $\delta=0$ to 1 at $\delta=1$. Similarly, we have $\tau_{1,3}(\delta)=\tau_{2,3}(\delta)=\alpha/\left(2+\alpha\delta-2\delta\right)$.

These calculations reveal different indices of regular variation on cones $\mathbb{E}_C$ with $|C|=1,2$ in the trivariate logistic case.


\subsection{Trivariate distribution with extremal mass on one vertex and one edge}
Now we consider a trivariate example where the extremal mass is placed on one cone $\mathbb{E}_C$ with $|C|=1$, and another with $|C|=2$. This can be achieved by taking $(X_{1},X_{2})$ to have a bivariate extreme value logistic distribution, and $X_{3}$ to be a standard Fr\'{e}chet random variable independent of $(X_{1},X_{2})$. The exponent measure in this case has the form
\[
	V(x,y,z) = \left( x^{-1/\alpha} + y^{-1/\alpha} \right)^{\alpha} + z^{-1},~~~\text{$\alpha \in (0,1)$.}
\]


\noindent{\bf $|C|=1$: $\tau_{1}(\delta)$,  $\tau_{2}(\delta)$,  $\tau_3(\delta)$.}
We first consider the index of regular variation on the cone $\mathbb{E}_1$. Following a similar procedure to previously, and exploiting the independence of $(X_{1},X_{2})$ and $X_{3}$, we have
\begin{align*}
\text{pr}\left( X_1 > t , X_2<t^\delta , X_3<t^\delta   \right) &=\text{pr}\left(X_3<t^\delta\right)\left\{\text{pr}\left(X_2<t^\delta\right)-\text{pr}\left(X_1<t,X_2<t^\delta\right)\right\}\\
	&\hspace{-3cm} =\exp\left(-t^{-\delta}\right)\left(\exp\left(-t^{-\delta}\right)-\exp\left[-t^{-\delta}\left\{1+t^{(\delta-1)/\alpha}\right\}^\alpha \right]\right)\\
	&\hspace{-3cm} =\exp\left(-2t^{-\delta}\right)\left\{1-\exp\left(-t^{-\delta}\left[\alpha t^{(\delta-1)/\alpha}+O\left\{t^{2(\delta-1)/\alpha}\right\}\right] \right)\right\}\\
	&\hspace{-3cm}=\left\{1-2t^{-\delta}+O\left(t^{-2\delta}\right)\right\}\left[\alpha t^{(\delta-1-\alpha\delta)/\alpha} + O\left\{t^{(2\delta-2-\alpha\delta)/\alpha}\right\}\right]\sim \alpha t^{(\delta-1-\alpha\delta)/\alpha},
\end{align*}
revealing that $\tau_{1}(\delta)=\alpha/(1+\alpha\delta-\delta)$. By similar calculations, $\tau_{2}(\delta)=\alpha/(1+\alpha\delta-\delta)$. For the cone $\mathbb{E}_3$, we have $\tau_3(\delta)=1$, since
\begin{align*}
\text{pr}\left( X_1 < t^\delta , X_2<t^\delta , X_3>t\right) &= \text{pr}\left(X_3>t\right)\text{pr}\left( X_1 < t^\delta , X_2<t^\delta\right)\\
	&\hspace{-3cm}=\left(1-e^{-1/t}\right)\exp\left\{-\left(t^{-\delta/\alpha}+t^{-\delta/\alpha}\right)^\alpha\right\}=\left(1-e^{-1/t}\right)\exp\left(-2^\alpha t^{-\delta}\right)\\
	&\hspace{-3cm}=\left\{t^{-1}+O\left(t^{-2}\right)\right\}\left\{1-2^\alpha t^{-\delta}+O\left(t^{-2\delta}\right)\right\}\sim t^{-1}.
\end{align*}


\noindent{\bf $|C|=2$: $\tau_{1,2}(\delta)$, $\tau_{1,3}(\delta)$, $\tau_{2,3}(\delta)$.}
We begin by showing that $\tau_{1,2}(\delta)=1$. We have
\begin{align*}
\text{pr}\left( X_1 > t, X_2>t , X_3<t^\delta\right) &\\
&\hspace{-3cm}= \text{pr}\left(X_3<t^\delta\right)\left\{1 - \text{pr}\left(X_1<t\right) - \text{pr}\left(X_2<t\right) + \text{pr}\left( X_1 < t , X_2<t\right)\right\}\\
&\hspace{-3cm}=\exp\left(-t^{-\delta}\right)\left\{1-2\exp\left(-1/t\right)+\exp\left(-2^\alpha t^{-1}\right)\right\}\\
&\hspace{-3cm}=\left\{1-t^{-\delta}+O\left(t^{-2\delta}\right)\right\}\left\{\left(2-2^\alpha\right)t^{-1}+O(t^{-2})\right\}\sim\left(2-2^\alpha\right)t^{-1}.
\end{align*}
Next, we consider the cone $\mathbb{E}_{1,3}$. In this case, we have
\begin{align*}
	\text{pr}&\left( X_1> t , X_2<t^\delta , X_3>t \right)=\text{pr}\left(X_3 > t \right)\left\{\text{pr}\left( X_2 < t^\delta\right)  - \text{pr}\left( X_1 < t, X_2 <t^\delta\right)\right\}\\
	&=\left\{1-\exp\left(-1/t\right)\right\}\left[\exp\left(-t^{-\delta}\right) -\exp\left\{-\left(t^{-1/\alpha}+t^{-\delta/\alpha}\right)^\alpha\right\} \right]\\
	&=\left\{t^{-1}+O\left(t^{-2}\right)\right\}\exp\left(-t^{-\delta}\right)\left(1 - \exp\left[ -\alpha t^{(\delta-1-\alpha\delta)/\alpha} + O\left\{t^{(2\delta-2-\alpha\delta)/\alpha}\right\} \right]\right)\\	
	&=\left\{t^{-1}+O\left(t^{-2}\right)\right\}\left\{1-t^{-\delta}+O\left(t^{-2\delta}\right)\right\}\left[\alpha t^{(\delta-1-\alpha\delta)/\alpha} + O\left\{t^{(2\delta-2-\alpha\delta)/\alpha}\right\} \right]\\
	&\sim \alpha t^{(\delta-1-\alpha\delta-\alpha)/\alpha},
\end{align*}
so we have $\tau_{1,3}(\delta)=\alpha/(\alpha\delta+1+\alpha-\delta)$. Again by symmetry, $\tau_{2,3}(\delta)=\alpha/(\alpha\delta+1+\alpha-\delta)$. These indices vary from $\alpha/(1+\alpha)$ at $\delta=0$ to $1/2$ at $\delta=1$.


\noindent{\bf $|C|=3$: $\tau_{1,2,3}$.}
Finally, we consider the cone $\mathbb{E}_{1,2,3}$, where
\begin{align*}
	\text{pr}\left( X_1 > t, X_2>t, X_3>t\right) &\\
	&\hspace{-2cm}= \text{pr}\left(X_3>t\right)\left\{1-\text{pr}\left( X_1 < t\right) -\text{pr}\left( X_2 < t\right) + \text{pr}\left( X_1 < t, X_2 <t\right)\right\}\\
	&\hspace{-2cm}=\left\{1-\exp\left(-1/t\right)\right\}\left\{1-2\exp\left(-1/t\right)+\exp\left(-2^\alpha t^{-1}\right)\right\}\\
	&\hspace{-2cm}=\left\{t^{-1}+O\left(t^{-2}\right)\right\}\left\{\left(2-2^\alpha\right)t^{-1}+O\left(t^{-2}\right)\right\}\sim \left(2-2^\alpha\right)t^{-2},
\end{align*}
i.e., $\tau_{1,2,3}=1/2$.

\subsection{Trivariate inverted logistic distribution}
Next, we consider an inverted trivariate extreme value distribution, defined via its distribution function
\begin{align*}
	\text{pr}(X_{1}<x,X_{2}<y,X_{3}<z) &\\
	&\hspace{-3cm}= 1-\left\{ F_{X_1}(x') + F_{X_2}(y') + F_{X_3}(z') \right\} \\
	&\hspace{-3cm}+ \left\{F_{X_1,X_2}(x',y')+F_{X_1,X_3}(x',z')+F_{X_2,X_3}(y',z')\right\}-F_{X_1,X_2,X_3}(x',y',z'),
\end{align*}
where $F_{X_1,X_2,X_3}$ denotes the corresponding trivariate extreme value distribution function; $F_{X_1,X_2}$, $F_{X_1,X_3}$ and $F_{X_2,X_3}$ are the corresponding bivariate distribution functions; $F_{X_1}$, $F_{X_2}$ and $F_{X_3}$ are the marginal distributions of $X_1$, $X_2$ and $X_3$; and noting that $-\log\left(1-e^{-1/x}\right) = -\log\left\{x^{-1} + O(x^{-2})\right\}$,
\[
x'=-\frac{1}{\log\left(1-e^{-1/x}\right)}\sim\frac{1}{\log x},
\]
as $x\rightarrow\infty$, with $y'$, $z'$ defined analogously. In the case of the trivariate inverted logistic distribution, which we focus on here, $F_{X_1,X_2,X_3}(x,y,z)=\exp\left\{-V(x,y,z)\right\}$, for $V$ defined as in \eqref{eqn:logisticExponent}. The inverted logistic distribution exhibits asymptotic independence, placing all extremal mass on the cones $\mathbb{E}_C$ with $|C|=1$. We will show that $\tau_{1}(\delta)=\tau_{2}(\delta)=\tau_{3}(\delta)=1$ for this model, and then calculate $\tau_{1,2}(\delta), \tau_{1,3}(\delta), \tau_{2,3}(\delta)$ and $\tau_{1,2,3}$.

\noindent{\bf $|C|=1$: $\tau_{1}(\delta)$, $\tau_{2}(\delta)$, $\tau_{3}(\delta)$.}
To begin, we focus on calculating $\tau_{1}(\delta)$ for $\delta>0$, by considering
\begin{align*}
\text{pr}\left(X_1>t, X_2<t^\delta, X_3<t^\delta\right) &\\
&\hspace{-4cm}=\text{pr}\left(X_1>t\right) - \text{pr}\left(X_1>t,X_2>t^\delta\right)\\
&\hspace{-3cm}-\text{pr}\left(X_1>t,X_3>t^\delta\right)+\text{pr}\left(X_1>t,X_2>t^\delta,X_3>t^\delta\right)\\
&\hspace{-4cm}=\left(1-e^{-1/t}\right) - 2\exp\left(-\left[\left\{\log t + O\left(t^{-1}\right)\right\}^{1/\alpha} + \left\{\delta\log t + O\left(t^{-2\delta}\right)\right\}^{1/\alpha}\right]^\alpha\right)\\
&\hspace{-3cm}+ \exp\left(-\left[\left\{\log t + O\left(t^{-1}\right)\right\}^{1/\alpha} + 2\left\{\delta\log t + O\left(t^{-2\delta}\right)\right\}^{1/\alpha}\right]^\alpha\right)\\
&\hspace{-4cm}=\left\{t^{-1}+O\left(t^{-2}\right)\right\}- 2\exp\left[-\left(1+\delta^{1/\alpha}\right)^\alpha\log t + O\left\{(t\log t)^{-1}\right\}\right] \\
&\hspace{-3cm}+ \exp\left[-\left(1+2\delta^{1/\alpha}\right)^\alpha\log t+ O\left\{(t\log t)^{-1}\right\}\right]\\
&\hspace{-4cm}=\left\{t^{-1} - 2t^{-\left(1+\delta^{1/\alpha}\right)^\alpha} + t^{-\left(1+2\delta^{1/\alpha}\right)^\alpha}\right\}\left\{1+o(1)\right\}\sim t^{-1},
\end{align*}
so we have $\tau_1(\delta)=1$ for $\delta>0$.

\noindent For $\delta=0$, we consider variables $X_1^*,X_2^*,X_3^*$ defined via truncation~\eqref{eqn:truncation}, and study
\begin{align*}
&\text{pr}\left(X_{1}^{*}>t,X_{2}^{*}=0,X_{3}=0\right)=\text{pr}\left(X_{1}>t,X_{2}< -1/\log p,X_{3}< -1/\log p\right)\\
	&=\text{pr}\left(X_1>t\right) - \text{pr}\left(X_{1}>t,X_{2}>-1/\log p\right) - \text{pr}\left(X_{1}>t,X_{3}>-1/\log p\right)  \\
	&\hspace{1cm}+ \text{pr}\left(X_{1}>t,X_{2}>-1/\log p,X_{3}>-1/\log p\right)\\
	&= \left\{t^{-1} + O\left(t^{-2}\right)\right\} - 2\exp\left(-\left[\left\{\log t +O\left(t^{-1}\right)\right\}^{1/\alpha}+\left\{-\log(1-p)\right\}^{1/\alpha}\right]^\alpha\right)\\
	&\hspace{1cm}+\exp\left(-\left[\left\{\log t +O\left(t^{-1}\right)\right\}^{1/\alpha}+2\left\{-\log(1-p)\right\}^{1/\alpha}\right]^\alpha\right)\\
	&= \left\{t^{-1}+O\left(t^{-2}\right)\right\} - 2\exp\left\{-\left\{\log t +O\left(t^{-1}\right)\right\}	\left(1+\left[\frac{-\log(1-p)}{\left\{\log t +O\left(t^{-1}\right)\right\}}\right]^{1/\alpha}\right)^\alpha\right\}\\
	&\hspace{1cm}+\exp\left\{-\left\{\log t +O\left(t^{-1}\right)\right\}\left(1+2\left[\frac{-\log(1-p)}{\left\{\log t +O\left(t^{-1}\right)\right\}}\right]^{1/\alpha}\right)^\alpha\right\}\\
	&= \left\{t^{-1}+O\left(t^{-2}\right)\right\} \\
	&\hspace{1cm} - 2\exp\Bigg[-\left\{\log t +O\left(t^{-1}\right)\right\} -\alpha\frac{\left\{-\log(1-p)\right\}^{1/\alpha}}{\left\{\log t +O\left(t^{-1}\right)\right\}^{1/\alpha-1}} \\
	&\hspace{3cm}- \frac{\alpha(\alpha-1)}{2}\frac{\left\{-\log(1-p)\right\}^{2/\alpha}}{\left\{\log t +O\left(t^{-1}\right)\right\}^{2/\alpha-1}}+ o\left\{\left(\log t\right)^{1-2/\alpha}\right\}\Bigg]\\
	&\hspace{1cm}+\exp\Bigg[-\left\{\log t +O\left(t^{-1}\right)\right\} -2\alpha\frac{\left\{-\log(1-p)\right\}^{1/\alpha}}{\left\{\log t +O\left(t^{-1}\right)\right\}^{1/\alpha-1}} \\
	&\hspace{3cm}- 2\alpha(\alpha-1)\frac{\left\{-\log(1-p)\right\}^{2/\alpha}}{\left\{\log t +O\left(t^{-1}\right)\right\}^{2/\alpha-1}} + o\left\{\left(\log t\right)^{1-2/\alpha}\right\}\Bigg]\\			
	&=\left\{t^{-1}+O\left(t^{-2}\right)\right\} \\
	&\hspace{1cm}+t^{-1}\Bigg(
- 2\exp\Bigg[-\alpha\frac{\left\{-\log(1-p)\right\}^{1/\alpha}}{\left\{\log t +O\left(t^{-1}\right)\right\}^{1/\alpha-1}} \\
	&\hspace{3cm} +\frac{\alpha(1-\alpha)}{2}\frac{\left\{-\log(1-p)\right\}^{2/\alpha}}{\left\{\log t +O\left(t^{-1}\right)\right\}^{2/\alpha-1}} + o\left\{\left(\log t\right)^{1-2/\alpha}\right\} + O\left(t^{-1}\right)\Bigg]\\
	&\hspace{2cm}+\exp\Bigg[-2\alpha\frac{\left\{-\log(1-p)\right\}^{1/\alpha}}{\left\{\log t +O\left(t^{-1}\right)\right\}^{1/\alpha-1}}\\
	&\hspace{3cm} +2\alpha(1-\alpha)\frac{\left\{-\log(1-p)\right\}^{2/\alpha}}{\left\{\log t +O\left(t^{-1}\right)\right\}^{2/\alpha-1}} + o\left\{\left(\log t\right)^{1-2/\alpha}\right\}+O\left(t^{-1}\right)\Bigg]\Bigg)\\	
	&\sim \frac{t^{-1}}{\left(\log t\right)^{2/\alpha-1}} \left[\alpha(1-\alpha)\left\{-\log(1-p)\right\}^{2/\alpha}\right],
\end{align*}
as $t\rightarrow\infty$, which is regularly varying of order $-1$. As such, the index of regular variation is $\tau_1(\delta)=1$ for $\delta=0$. Combining these results, $\tau_1(\delta)=1$ for all $\delta\in[0,1]$. By symmetric arguments, $\tau_2(\delta)=\tau_3(\delta)=1$ for all $\delta\in[0,1]$.

{\noindent{\bf $|C|=2$: $\tau_{1,2}(\delta)$, $\tau_{1,3}(\delta)$, $\tau_{2,3}(\delta)$.}
We first consider the cone $\mathbb{E}_{1,2}$. For $\delta>0$, we have
\begin{align*}
\text{pr}&\left(X_1>t,X_2>t,X_3<t^\delta\right) = \text{pr}\left(X_1>t,X_2>t\right) -\text{pr}\left(X_1>t,X_2>t,X_3>t^\delta\right)\\
	&=\exp\left[-2^\alpha\left\{-\log\left(1-e^{-1/t}\right)\right\}\right] \\
	& \hspace{1cm}-\exp\left(-\left[2\left\{-\log\left(1-e^{-1/t}\right)\right\}^{1/\alpha} + \left\{-\log\left(1-e^{-1/t^\delta}\right)\right\}^{1/\alpha}\right]^\alpha\right)\\
	&=\left(1-e^{-1/t}\right)^{2^\alpha}-\exp\left(-\left[2\left\{\log t + O\left(t^{-1}\right)\right\}^{1/\alpha} + \left\{\delta\log t + O\left(t^{-2\delta}\right)\right\}^{1/\alpha}\right]^\alpha\right)\\
	&=t^{-2^\alpha} + O\left(t^{-1-2^\alpha}\right) - \exp\left[-\left(2+\delta^{1/\alpha}\right)^\alpha\log t+ O\left\{\left(t \log t \right)^{-1}\right\}\right] \sim t^{-2^\alpha}, 
\end{align*}
i.e., $\tau_{1,2}(\delta)=2^{-\alpha}$, $\delta>0$. For $\delta=0$, using similar arguments as for the $|C|=1$ case, we have
\begin{align*}
\text{pr}\left(X_{1}^{*}>t,X_{2}^{*}>t,X_{3}^{*}=0\right)	&=\text{pr}\left(X_{1}>t,X_{2}>t,X_{3}< -1/\log p\right)\\
	&\hspace{-4cm}=\text{pr}\left(X_{1}>t,X_{2}>t\right) - \text{pr}\left(X_{1}>t,X_{2}>t, X_{3}>-1/\log p\right) \\
	&\hspace{-4cm}=\exp\left[-2^\alpha \left\{\log t + O\left(t^{-1}\right)\right\}\right]\\
	&\hspace{-2cm}- \exp\left(-\left[ 2\left\{\log t + O\left(t^{-1}\right)\right\}^{1/\alpha} +\{-\log(1-p)\}^{1/\alpha} \right]^{\alpha}\right)\\
	&\hspace{-4cm}= \exp\left[-2^\alpha \left\{\log t + O\left(t^{-1}\right)\right\}\right]\\
	&\hspace{-2cm}\cdot\left(1-\exp\left[-\alpha 2^{\alpha-1}\frac{\{-\log(1-p)\}^{1/\alpha}}{\left\{\log t +O\left(t^{-1}\right)\right\}^{-1+1/\alpha}}+o\left\{\left(\log t\right)^{1-2/\alpha}\right\}\right]\right)\\
	&\hspace{-4cm}\sim \frac{t^{-2^{\alpha}}}{\left(\log t\right)^{-1+1/\alpha}}\alpha 2^{\alpha-1}\{-\log(1-p)\}^{1/\alpha},
\end{align*}
as $t\rightarrow\infty$. As such, the index of regular variation is $\tau_{1,2}(\delta)=2^{-\alpha}$ for all $\delta\in[0,1]$. By analogous arguments, we also have $\tau_{1,3}(\delta)=\tau_{2,3}(\delta)=2^{-\alpha}$, $\delta\in[0,1]$.

\noindent{\bf $|C|=3$: $\tau_{1,2,3}$.}
To calculate the index of regular variation for cone $\mathbb{E}_{1,2,3}$, we consider
\begin{align*}
	\text{pr}\left( X_1 > t, X_2>t, X_3>t\right) &=\exp\left\{\log\left(1-e^{-1/t}\right)V(1,1,1)\right\}\\
	& = \left(1-e^{-1/t}\right)^{V(1,1,1)} = \left\{1-1+t^{-1}+O(t^{-2})\right\}^{3^\alpha}\sim t^{-3^\alpha},
\end{align*}
so $\tau_{1,2,3}=3^{-\alpha}$. This corresponds to the known value of $\eta_{1,2,3}$ for the trivariate inverted logistic distribution, as $\eta_{1,2,3}=V(1,1,1)^{-1}=3^{-\alpha}$.


\subsection{Multivariate Gaussian distribution}
The multivariate Gaussian provides a further example of a distribution which asymptotically places all mass on cones $\mathbb{E}_C$ with $|C|=1$. In the case where $d=3$, for a multivariate Gaussian distribution with covariance matrix $\Sigma$, \cite{Nolde2014}, for example, shows that 
\[
	\eta_{1,2,3}=\left(\bmrem{1}_{3}^{T}\Sigma^{-1}\bmrem{1}_{3}\right)^{-1};~~~~\eta_{i,j}=\left(\bmrem{1}_{2}^T\Sigma_{i,j}^{-1}\bmrem{1}_{2}\right)^{-1},~~i<j \in\{1,2,3\},
\] 
where $\Sigma_{i,j}$ is the submatrix of $\Sigma$ corresponding to variables $i$ and $j$, and $\bmrem{1}_{d}\in\mathbb{R}^{d}$ is a vector of 1s.

The covariance matrix $\Sigma$ may be written as  
\[
\Sigma = \begin{bmatrix}
   	    1     & \rho_{12} & \rho_{13} \\
    \rho_{12} &     1     & \rho_{23} \\
    \rho_{13} & \rho_{23} &     1 
\end{bmatrix} = \begin{bmatrix}
   	\Sigma_{12}    & B \\
    B^{T} &  1 
\end{bmatrix} , ~~\text{where }~
\Sigma_{12} = \begin{bmatrix}
   	    1     & \rho_{12} \\
    \rho_{12} &     1 
\end{bmatrix} ~\text{and }~
B = \begin{bmatrix}
   	\rho_{13}\\
    \rho_{23} 
\end{bmatrix}. 
\]

We note that since $\Sigma$ and $\Sigma_{12}$ are covariance matrices, they must be positive definite, with $\det \left(\Sigma\right) = 1 - \rho_{12}^{2}-\rho_{13}^{2}-\rho_{23}^{2} + 2\rho_{12}\rho_{13}\rho_{23} > 0$ and $\det\left(\Sigma_{12}\right)=1-\rho_{12}^{2}>0$. The inverse of $\Sigma$ is given by the block matrix
\[
	\Sigma^{-1} = \begin{bmatrix}
   	\Sigma_{12}^{-1} + \Sigma_{12}^{-1}B(1-B^{T}\Sigma_{12}^{-1}B)^{-1}B^{T}\Sigma_{12}^{-1}~~    & -\Sigma_{12}^{-1}B(1-B^{T}\Sigma_{12}^{-1}B)^{-1} \\
    -(1-B^{T}\Sigma_{12}^{-1}B)^{-1}B^{T}\Sigma_{12}^{-1} & (1-B^{T}\Sigma_{12}^{-1}B)^{-1} 
\end{bmatrix}, 
\]
so that
\begin{align*}
\bmrem{1}_{3}^{T}\Sigma^{-1}\bmrem{1}_{3} &= \bmrem{1}_{2}^{T}\Sigma_{12}^{-1}\bmrem{1}_{2} +  (1-B^{T}\Sigma_{12}^{-1}B)^{-1} \left(1-\bmrem{1}_{2}^{T}\Sigma_{12}^{-1}B-B^{T}\Sigma_{12}^{-1}\bmrem{1}_{2}+\bmrem{1}_{2}^T\Sigma_{12}^{-1}BB^{T}\Sigma_{12}^{-1}\bmrem{1}_{2}\right)\\
&\hspace{-1cm}=\bmrem{1}_{2}^{T}\Sigma_{12}^{-1}\bmrem{1}_{2} + \frac{1-\rho_{12}^{2}}{1 - \rho_{12}^{2}-\rho_{13}^{2}-\rho_{23}^{2} + 2\rho_{12}\rho_{13}\rho_{23} } \left(1-2\bmrem{1}_{2}^T\Sigma_{12}^{-1}B+  \bmrem{1}_{2}^T\Sigma_{12}^{-1}BB^{T}\Sigma_{12}^{-1}\bmrem{1}_{2}\right)\\
&\hspace{-1cm}=\bmrem{1}_{2}^{T}\Sigma_{12}^{-1}\bmrem{1}_{2} + \frac{\det(\Sigma_{12})}{\det(\Sigma)}\left(1-\bmrem{1}_{2}^{T}\Sigma_{12}^{-1}B\right)^{2}\\
&\hspace{-1cm}=\bmrem{1}_{2}^{T}\Sigma_{12}^{-1}\bmrem{1}_{2} + \frac{\det(\Sigma_{12})}{\det(\Sigma)} \left(1-\frac{\rho_{13}+\rho_{23}}{1+\rho_{12}}\right)^{2} \geq \bmrem{1}_{2}^{T}\Sigma_{12}^{-1}\bmrem{1}_{2} ,
\end{align*}
with equality if and only if $1+\rho_{12}=\rho_{13}+\rho_{23}$. By similar calculations,
\[
	\bmrem{1}_{3}^{T}\Sigma^{-1}\bmrem{1}_{3} \geq \bmrem{1}_{2}^T\Sigma_{i,j}^{-1}\bmrem{1}_{2},~~~i<j \in\{1,2,3\},
\]
with equality if and only if $1+\rho_{ij}=\rho_{ik}+\rho_{jk}$, in which case $\eta_{1,2,3}=\eta_{i,j}$. Applying Theorem~\ref{thm:etatau2}, for this trivariate case, if $1+\rho_{C}\neq \sum\limits_{C':|C'|=2, C'\neq C}\rho_{C'}$ for all $C\subset\{1,2,3\}$ with $|C|=2$, then $\tau_{C}(1)=\eta_{C}$ for any set $C\subseteq\{1,2,3\}$ with $|C|\geq 2$. Since $\tau_C(\delta)$ is non-decreasing in $\delta$, $\tau_C(\delta)\leq\eta_C$ for $\delta\in[0,1)$. We also know $\tau_{1,2,3}=\eta_{1,2,3}$.

In this case, calculation of the explicit formulas for $\delta<1$, in a manner similar to the inverted logistic case, is complicated by the need to consider asymptotic approximations of Gaussian cumulative distribution functions beyond first order. As such, we do not attempt this here, but note that since $\tau_C(1)\leq\eta_C$, $|C|\geq 2$, we would be likely to estimate $\tau_C(\delta)<1$ in practice.
 
To gain some insight into the remaining cases of $C=\{i\}$, $i=1,2,3$, we consider the conditional extreme value model of \cite{Heffernan2004}. Let $\bmrem{Y}=\log \bmrem{X}$, so that $\bmrem{Y}$ has standard Gumbel marginal distributions, and all correlations be positive. Then conditioning on $Y_i$ gives
\[
	\text{pr}\left(Y_j - \rho^2_{ij}t \leq t^{1/2}z_j, Y_k - \rho^2_{ik}t \leq t^{1/2}z_k, Y_i>t\right)\sim N(z_j,z_k)\text{pr}(Y_i>t),~~~t\rightarrow\infty,
\]
for $N(z_j,z_k)$ denoting the distribution function of a particular Gaussian distribution (Heffernan and Tawn, 2004, Section~8). For $z_j=z_k=0$, this equates to
\begin{align}
	\text{pr}\left(Y_j \leq \rho^2_{ij}t, Y_k \leq \rho^2_{ik}t, Y_i>t\right)\sim N(0,0)\text{pr}(Y_i>t),~~~t\rightarrow\infty.
\label{eqn:MVNeqn}
\end{align}
In the trivariate case with Gumbel margins, equation~\eqref{eqn:RVassumption1} of Assumption~\ref{assumption:HRV} can be written as
\[
	\text{pr}\left(Y_j \leq  \delta t, Y_k \leq \delta t, Y_i>t\right)\in{\rm{RV}}_{-1/\tau_C(\delta)},~~~t\rightarrow\infty.
\]
Considering equation~\eqref{eqn:MVNeqn} again, we see that if $\delta\geq\max\left(\rho^2_{ij},\rho^2_{ik}\right)$, in place of the limiting Gaussian distribution, mass will occur at $(-\infty,-\infty)$ for variables $(Y_j,Y_k)$, which implies that $\tau_i(\delta)=1$. Alternatively, if $\rho^2_{ij}\leq\delta<\rho^2_{ik}$ or $\rho^2_{ik}\leq\delta<\rho^2_{ij}$, we have mass at $(-\infty,\infty)$ or $(\infty,-\infty)$, respectively, and for $\delta<\min\left(\rho^2_{ij},\rho^2_{ik}\right)$ mass occurs at $(\infty,\infty)$. In these cases, the left-hand side of equation~\eqref{eqn:MVNeqn} is $o\left\{\text{pr}\left(Y_i>t\right)\right\}$, which is consistent with $\tau_i(\delta)<1$.


\section{Simulation study}
\subsection{Estimation of $\tau_C(\delta)$ in Method~2}
In Method~2, introduced in Section~\ref{subsec:method2} of the paper, we consider regions of the form
\[
\tilde{E}_C = \left\{\bmrem{x}\in\mathbb{E}:x_\vee^{D\setminus C} \leq \left(x_\wedge^C\right)^\delta \right\}, ~|C|<d;~~~\tilde{E}_D = \mathbb{E}\setminus \bigcup_{C\in 2^D\setminus\emptyset:|C|<d}\tilde{E}_C,
\]
for $C\in2^D\setminus\emptyset$, and assume that $\text{pr}\left(X_\wedge^C >q, \bmrem{X}\in\tilde{E}_C\right)\in{\rm{RV}}_{-1/\tilde\tau_C(\delta)}$, for $X_\wedge^C = \min_{i\in C} X_i$. This is used as an alternative to considering $\text{pr}\left(X_\wedge^C>t,X_\vee^{D\setminus C}\leq t^\delta\right)\in {\rm{RV}}_{-1/\tau_C(\delta)}$, for which there is no clear structure variable. In Fig.~\ref{fig:tauEstimates}, we demonstrate how well the parameter $\tau_C(\delta)$ is approximated using Method~2. We consider the trivariate logistic distribution, with theoretical $\tau_C(\delta)$ values given in case~(iii) of Table~\ref{table:tauExamples}. For $\alpha=0.25$ and $\alpha=0.5$, we take samples of size 100,000 from this distribution, and use Method~2 to estimate $\tilde\tau_1(\delta)$, $\tilde\tau_{1,2}(\delta)$ and $\tilde\tau_{1,2,3}$ for values of $\delta\in\{0.1,\dots,0.95\}$. The thresholds used correspond to the 0.985 quantile of observed $X_\wedge^C$ values in each region $\tilde{E}_C$. Each simulation is repeated 100 times, and the true $\tau_C(\delta)$ parameter values are shown in grey.

The results indicate that the estimator derived from considering $\text{pr}\left\{X_\wedge^C>t,X_\vee^{D\setminus C}\leq \left(X_\wedge^C\right)^\delta\right\}$ yields $\tau_C(\delta)$ as defined through $\text{pr}\left(X_\wedge^C>t,X_\vee^{D\setminus C}\leq t^\delta\right)$. We observe increased variability in the estimates of $\tilde\tau_{1}(\delta)$ and $\tilde\tau_{1,2}(\delta)$ for small values of $\delta$, and $\tilde\tau_{1,2,3}$ for large values of $\delta$, which is likely due to the limited data in the corresponding regions $\tilde{E}_C$ for these cases. 

\begin{figure}[!htbp]
\begin{center}
\includegraphics[width=\textwidth]{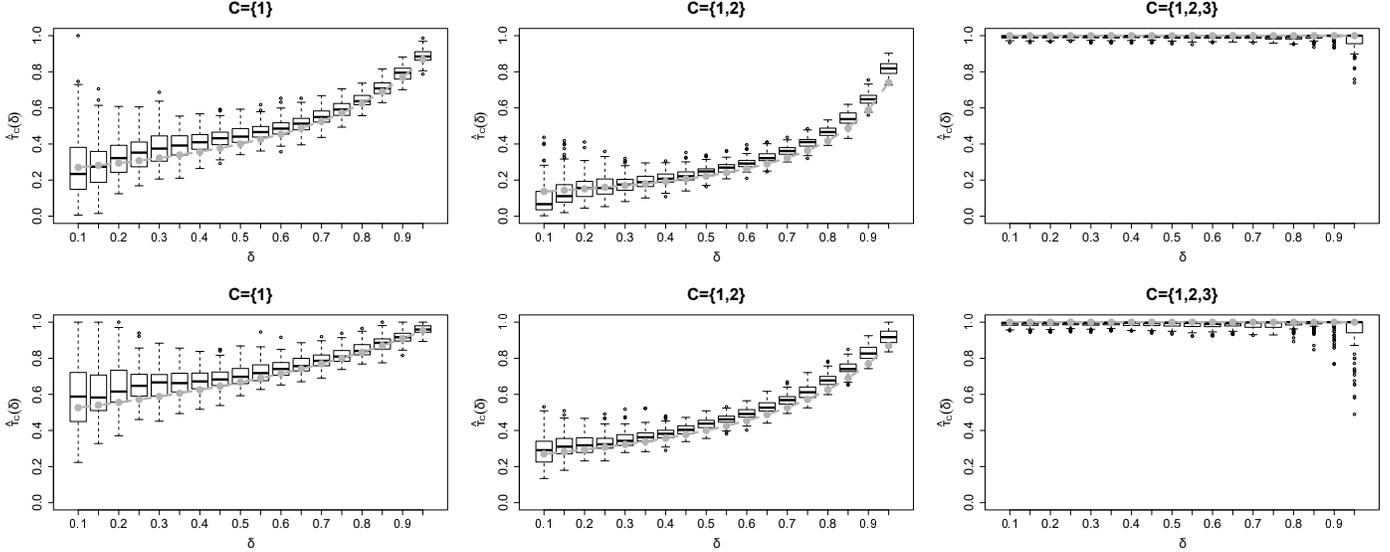}
\caption{Estimates of $\tau_1(\delta)$, $\tau_{1,2}(\delta)$ and $\tau_{1,2,3}$ for data simulated from trivariate logistic distributions with $\alpha=0.25$ (top) and $\alpha=0.5$ (bottom).}
\label{fig:tauEstimates}
\end{center}
\end{figure}

\subsection{Area under the receiver operating characteristic curve results for the max-mixture distribution}
In Table~\ref{table:AUCmix} of the main paper, we present the average area under the receiver operating characteristic curve for Method~1, Method~2 and the approach of \citeauthor{Goix2015}\ applied to samples taken from a particular max-mixture distribution. Figure~\ref{fig:AUC_mix} provides boxplots of all the values obtained in these simulations. 

\begin{figure}[!htbp]
\begin{center}
\includegraphics[width=\textwidth]{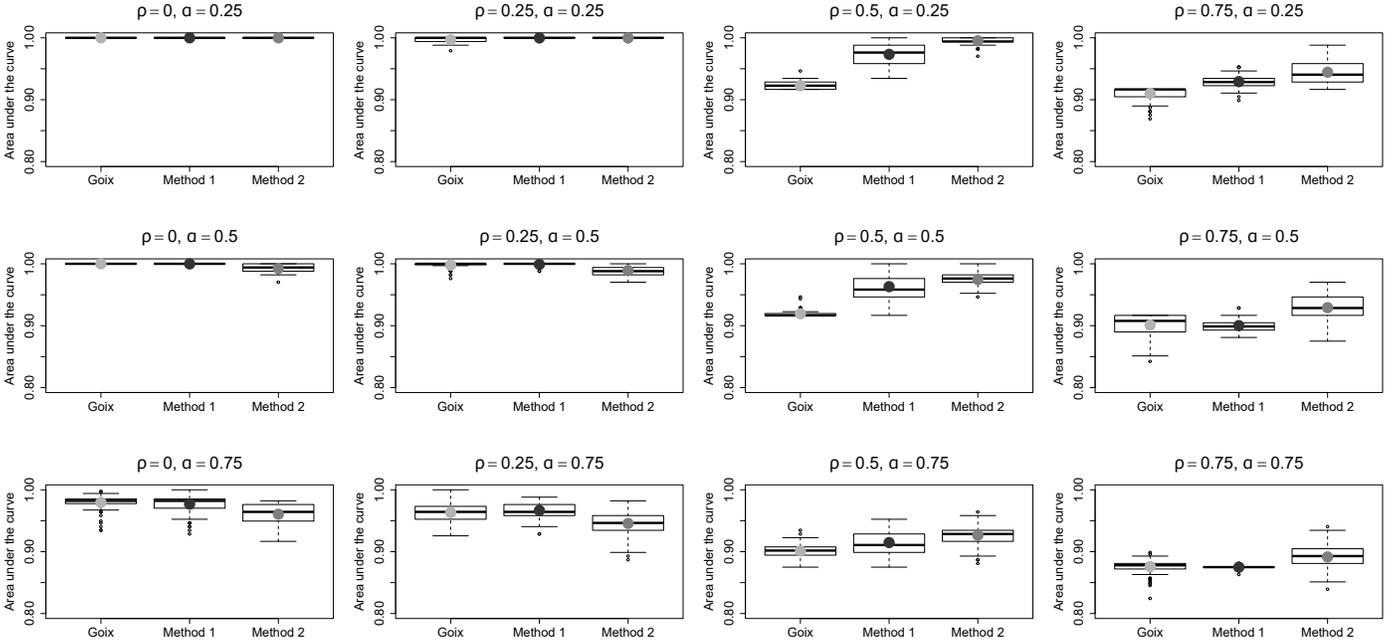}
\caption{Area under the receiver operating characteristic curve results for 100 simulations from a five-dimensional max-mixture distribution.}
\label{fig:AUC_mix}
\end{center}
\end{figure}

\subsection{Asymmetric logistic distribution}\label{subsec:simulationAsymmetric}
We now present simulation results for the asymmetric logistic distribution, using the same metrics as for the max-mixture distribution in Section~\ref{sec:simulation} of the paper. This model belongs to the class of multivariate extreme value distributions, and  it is possible to calculate the proportion of extremal mass associated with each cone $\mathbb{E}_C$.

In standard Fr\'{e}chet margins, the multivariate extreme value distribution function is of the form $\exp\left\{-V(\bmrem{x})\right\}$. \cite{ColesAndTawn1991} show that the spectral density corresponding to cone $\mathbb{E}_C$ is $h_{C}\left(\bmrem{w}_{C}\right) = -V^{\{C\}}(\bmrem{w}_{C})/d$, where
\begin{align*}
	V^{\{C\}}(\bmrem{x}_C)=\lim\limits_{x_{j}\rightarrow 0:j\in D\setminus C }\left(\prod_{i\in C}\frac{\partial}{\partial x_{i}}\right)V(\bmrem{x}),~~~~~\text{ $\bmrem{x}_{C}=\{x_{i}:i\in C\}$,}
\end{align*}
for $\bmrem{w}_{C}=\bmrem{x}_{C}/r_{C}$ and $r_{C}=\sum_{i\in C}x_{i}$. Hence, for $	\mathbb{B}_{C} = \{\bmrem{w}\in\mathcal{S}_{d-1} : w_{i} \in (0,1], i\in C;~w_{j}=0, j\in D\setminus C \}$, the proportion of mass on corresponding cone $\mathbb{E}_C$ is 
\begin{align}
p_{C}=-\frac{1}{d}\int_{\mathbb{B}_{C}}V^{\{C\}}(\bmrem{w}_{C})\prod_{i\in C}dw_{i}.
\label{eqn:spectralDensity}
\end{align}
For the asymmetric logistic model (\citeauthor{Tawn1990}, 1990), the exponent measure $V$ is defined as
\begin{align}
	V(\bmrem{x})= \sum_{C\in 2^{D}\setminus\emptyset} \left\{ \sum_{i\in C} \left(\theta_{i,C}/x_{i}\right)^{1/\alpha_{C}} \right\}^{\alpha_{C}},~~~~~\text{$\theta_{i,C} \in [0,1]$,}
\label{eqn:alogExponent}
\end{align}
with $\theta_{i,C}=0$ if $i\notin C$, $\sum\limits_{C\in 2^{D}\setminus\emptyset}\theta_{i,C}=1$ for all $i=1,\dots,d$ and $C\in 2^{D}\setminus\emptyset$, and dependence parameters $\alpha_{C}\in(0,1]$. In Proposition~1 of Section~\ref{subsec:alogProof}, we show that for the $d$-dimensional asymmetric logistic model with all $\alpha_{C}\equiv\alpha$, the proportion of mass on cone $\mathbb{E}_C$ is 
\[
	p^{(d)}_{C} = \sum_{i\in C}\theta_{i,C}/d,~~~~~\text{$C\in 2^{D}\setminus\emptyset$.}
\]
Using this new result, we can compare our estimated proportions to the truth using the Hellinger distance defined in equation~\eqref{eqn:hellinger} of the paper.

Following \cite{Goix2015}, we simulate data from an asymmetric logistic distribution with $\alpha_C\equiv\alpha$, whose extremal mass is concentrated on $f$ randomly chosen sub-cones, ensuring that moment constraint \eqref{eqn:momentConstraint} is satisfied. Suppose the sub-cones chosen correspond to subsets $F_{1},\dots,F_{f} \in 2^{D}\setminus\emptyset$. The conditions on the parameters of the asymmetric logistic distribution are satisfied by setting 
\begin{align}
	\theta_{i,C} = \left| \left\{ j : i\in F_{j},  j\in\{1,\dots,f\}\right\} \right|^{-1},~~~~~\text{$C\in \{F_{1},\dots,F_{f}\}$,}
\label{eqn:alogTheta}
\end{align}
and $\theta_{i,C}=0$ otherwise. We present results for dimensions $d=5$ and $d=10$. For $d=5$, we simulate samples of size $n=10,000$, and test both our methods when there are truly 5, 10 and 15 cones $\mathbb{E}_C$ with extremal mass. For $d=10$, we have $n=100,000$ samples, and consider 10, 50 and 100 cones with extremal mass. We set the tuning parameters as in Section~\ref{subsec:simulationAsymmetric} of the paper, and repeat each setting 100 times. In Table \ref{table:AUC}, we present the average area under the receiver operating characteristic curve for $\alpha\in\{0.25,0.5,0.75\}$. Boxplots of the full results obtained are provided in Figs.~\ref{fig:AUC_5D}~and~\ref{fig:AUC_10D}. In the asymmetric logistic model, the closer $\alpha_{C}$ is to 1, the larger the values of $\tau_{\underline{C}}(\delta)$ for any fixed $\delta$ and $\underline{C}\subset C$, as demonstrated by cases (ii) and (iii) in Table \ref{table:tauExamples}. Thus, the larger the value of $\alpha$ in our simulations, the more difficult it is to determine which  cones $\mathbb{E}_C$ truly contain extremal mass.

\begin{figure}
\begin{center}
\includegraphics[width=0.9\textwidth]{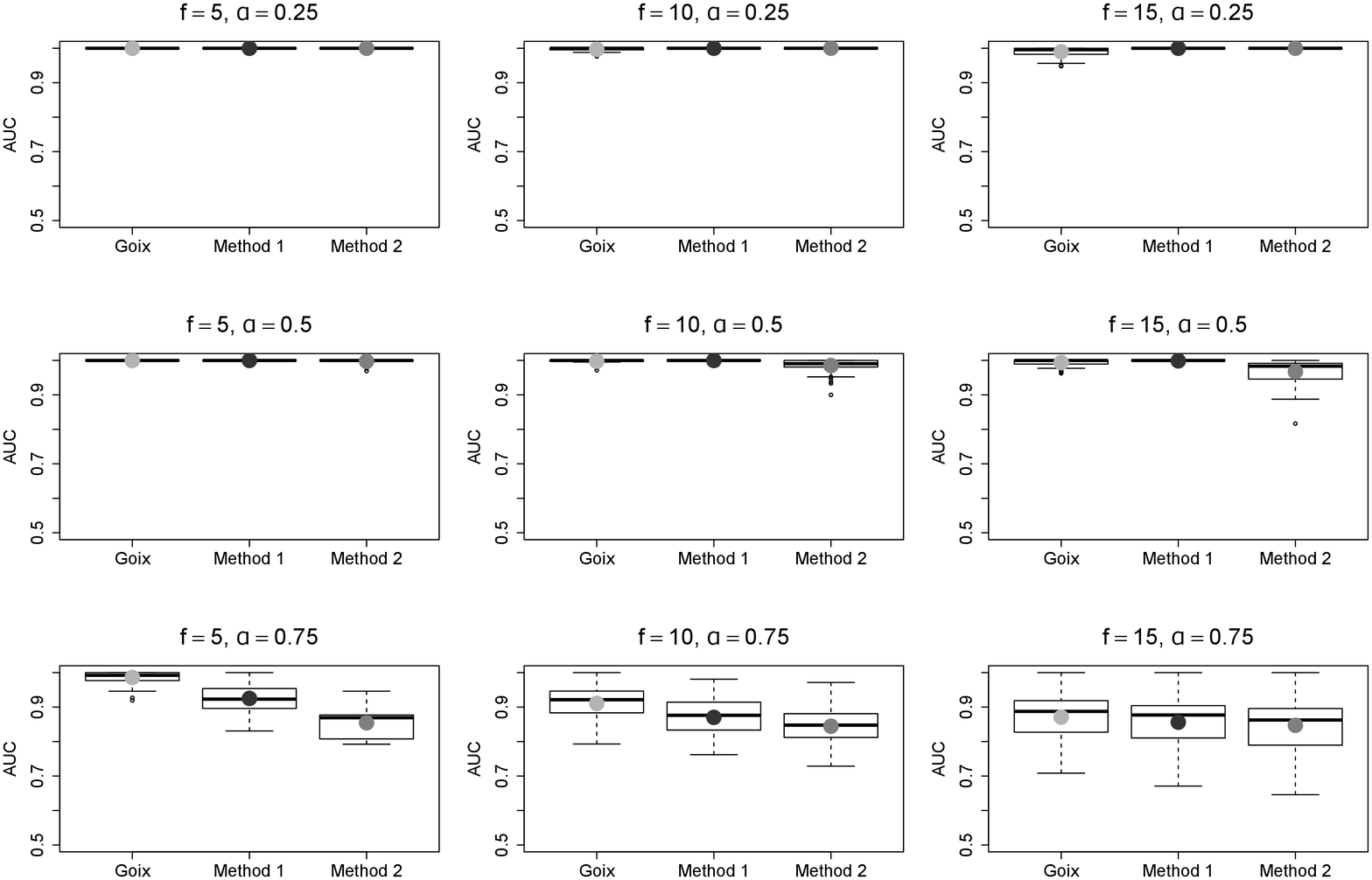}
\caption{Areas under the receiver operating characteristic curves for 100 simulations from a five-dimensional asymmetric logistic distribution.}
\label{fig:AUC_5D}
\includegraphics[width=0.9\textwidth]{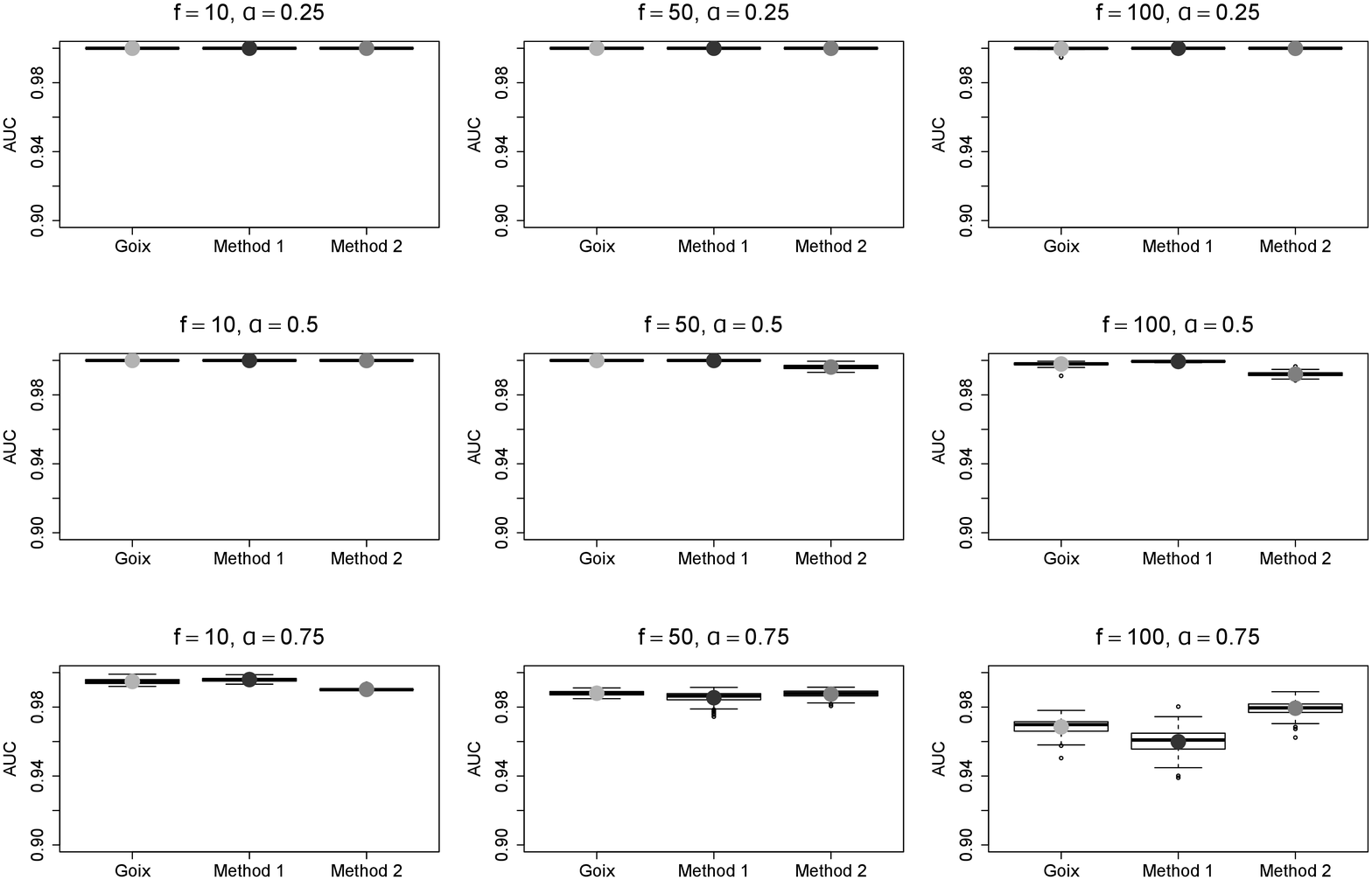}
\caption{Areas under the receiver operating characteristic curves for 100 simulations from a ten-dimensional asymmetric logistic distribution.}
\label{fig:AUC_10D}
\end{center}
\end{figure}

\begin{table}[!htbp]
\resizebox{\textwidth}{!}{%
\centering
\begin{tabular}{|c|ccccccccc|}
\hline
$(\alpha,f)$& $(0.25,5)$ & $(0.25,10)$ & $(0.25,15)$ & $(0.5,5)$ & $(0.5,10)$ & $(0.5,15)$ & $(0.75,5)$ & $(0.75,10)$ & $(0.75,15)$ \\
\hline
\citeauthor{Goix2016} & 100 (0.1) & 99.7 (0.5) & 99.0 (1.3) & 100 (0.2) & 99.8 (0.5) & 99.4 (0.9) & 98.6 (1.7) & 91.1 (4.9) & 87.1 (7.1)\\
Method~1		      & 100 (0.0) & 100 (0.0) & 100 (0.0) & 100 (0.0) & 100 (0.1) & 99.9 (0.3) & 92.5 (4.1) & 87.0 (5.5) & 85.6 (7.3)\\
Method~2			  & 100 (0.0) & 100 (0.0) & 100 (0.0) & 99.7 (0.6) & 98.5 (1.8) & 96.8 (3.2) & 85.4 (4.3) & 84.4 (5.7) & 84.7 (7.5)\\
\hline 
\hline
$(\alpha,f)$& $(0.25,10)$ & $(0.25,50)$ & $(0.25,100)$ & $(0.5,10)$ & $(0.5,50)$ & $(0.5,100)$ & $(0.75,10)$ & $(0.75,50)$ & $(0.75,100)$ \\
\hline
\citeauthor{Goix2016} &  100 (0.0) & 100 (0.0) & 100 (0.1) & 100 (0.0) & 100 (0.0) & 99.8 (0.1) & 99.5 (0.2) & 98.8 (0.1) & 96.9 (0.5)\\
Method~1		  	  &  100 (0.0) & 100 (0.0) & 100 (0.0) & 100 (0.0) & 100 (0.0) & 99.9 (0.0) & 99.6 (0.1) & 98.5 (0.4) & 96.0 (0.8)\\
Method~2			  &  100 (0.0) & 100 (0.0) & 100 (0.0) & 100 (0.0) & 99.6 (0.1) & 99.2 (0.1) & 99.0 (0.1) & 98.8 (0.2) & 97.9 (0.4)\\
\hline
\end{tabular}}
 \caption{Average areas under the receiver operating characteristic curves, given as percentages, for 100 samples from five-dimensional (top) and ten-dimensional (bottom) asymmetric logistic distributions, with dependence parameter $\alpha$ and $\theta_{i,C}$ determined via \eqref{eqn:alogTheta}. Standard deviations of these results are given in brackets.}
   \label{table:AUC}
\end{table}

\begin{figure}
\begin{center}
\includegraphics[width=\textwidth]{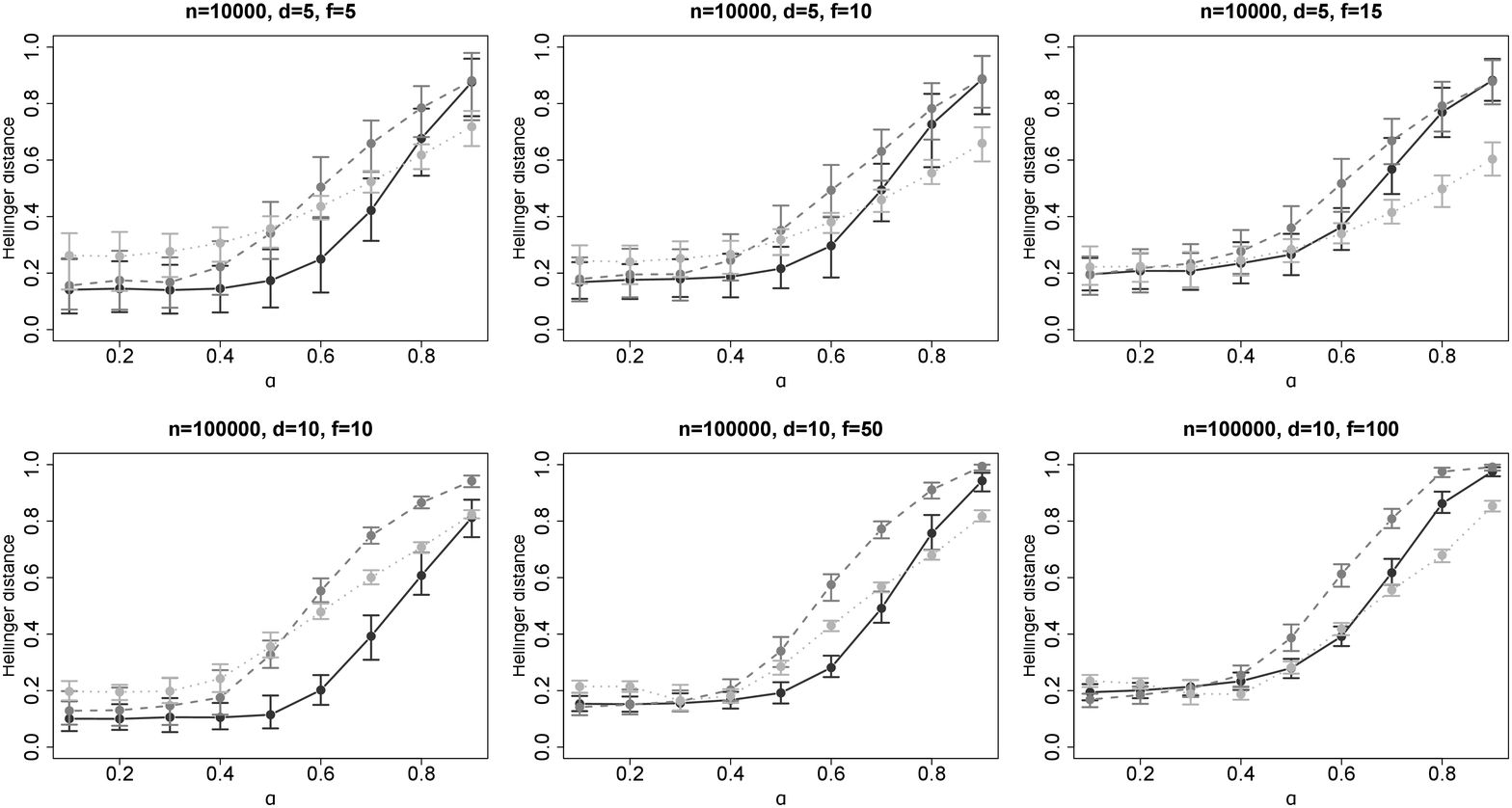}
\caption{Mean Hellinger distance, 0.05 and 0.95 quantiles over 100 simulations. Method~1: solid lines; Method~2: dashed lines; \citeauthor{Goix2015}: dotted lines.}
\label{fig:asymmetricSimulations}
\end{center}
\end{figure}

The average areas under the receiver operating characteristic curves in Table~\ref{table:AUC} show that all three methods perform well when $\alpha=0.25$ and $\alpha=0.5$, for both $d=5$ and $d=10$, with Method~1 slightly outperforming the other two approaches. The results suggest that the method of \cite{Goix2015} is generally the most successful classifier when $\alpha=0.75$, followed by Method~1, although the most substantial difference in results occurs for $(d,f,\alpha)=(10,100,0.75)$, where Method~2 is most successful; this is supported by the boxplots of results in Figs.~\ref{fig:AUC_5D}~and~\ref{fig:AUC_10D}. In principle, Method~1 should be better than Method~2 here, so greater assessment of tuning parameters may be required. It is possible that the method of \citeauthor{Goix2015}\ is most successful for larger values of $\alpha$ since it is more difficult for Methods~1~and~2 to distinguish between regions where $\tau_C(\delta)$ does and does not equal 1 in these cases. 

Figure~\ref{fig:asymmetricSimulations} shows the average Hellinger distance for $\alpha\in[0.1,0.9]$ for each of the cases described above. For the most sparse cases, $(d,f)=(5,5)$ and $(d,f)=(10,10)$, Method~1 performs significantly better than the other two approaches overall. For the less sparse cases, $(d,f)=(5,15)$ and $(d,f)=(10,100)$, the three methods give similar results in terms of the Hellinger distance for $\alpha\leq 0.5$, but the method of \citeauthor{Goix2015} is most successful for larger $\alpha$ values. When the extreme values are concentrated on fewer cones $\mathbb{E}_C$, it may be easier to estimate true values of $\tau_C(\delta)=1$ using Method~1 than in less sparse examples. For the asymmetric logistic distribution, Method~1 often performs better in terms of estimating the proportion of extremal mass on each cone $\mathbb{E}_C$, while the method of \cite{Goix2015} is better at classification, although this method does tend to place mass on too many cones $\mathbb{E}_C$, as shown in Fig.~\ref{fig:maxmixFaces} of the paper. 


\section{Calculating the mass on each cone $\mathbb{E}_C$ for an asymmetric logistic model}\label{subsec:alogProof}
\begin{proposition}
For the $d$-dimensional asymmetric logistic model with exponent measure \eqref{eqn:alogExponent} and $\alpha_{C}\equiv\alpha\in(0,1)$ for all $C\in 2^{D}\setminus\emptyset$, 
\[
d p_{C}^{(d)}=\sum\limits_{i\in C}\theta_{i,C},
\]
where $p_C^{(d)}$ denotes the proportion of mass on cone $\mathbb{E}_C$.
\end{proposition}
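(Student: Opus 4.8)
The plan is to avoid evaluating the Dirichlet-type integral in \eqref{eqn:spectralDensity} directly, and instead to read off $p_C^{(d)}$ from the standard max-representation of the asymmetric logistic model together with the additivity of exponent measures. Recall that if $\{Z_C=(Z_{i,C})_{i\in C}:C\in 2^D\setminus\emptyset\}$ are independent random vectors, each $Z_C$ following the symmetric extreme value logistic distribution with dependence parameter $\alpha$ in unit Fr\'echet margins, so that $\text{pr}(Z_{i,C}\le z_i,\, i\in C)=\exp\{-(\sum_{i\in C}z_i^{-1/\alpha})^{\alpha}\}$, then $X_i=\max_{C\ni i}\theta_{i,C}Z_{i,C}$ has joint distribution function $\exp\{-V(x)\}$ with $V$ as in \eqref{eqn:alogExponent} and all $\alpha_C\equiv\alpha$; I would assume throughout that $\theta_{i,C}>0$ precisely when $i\in C$, which holds under \eqref{eqn:alogTheta} and is otherwise a harmless relabelling of terms. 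Since $X$ is the componentwise maximum of the independent vectors $(\theta_{i,C}Z_{i,C}\mathbbm{1}_{\{i\in C\}})_{i=1}^d$, its exponent measure $\mu$ on $\mathbb{E}$ satisfies $\mu=\sum_{C}\mu^{(C)}$, where $\mu^{(C)}$ is the image of the exponent measure $\nu_C$ of $Z_C$ under the embedding $z_C\mapsto(\theta_{i,C}z_i\mathbbm{1}_{\{i\in C\}})_{i=1}^d$.

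Next I would use that the symmetric logistic with $\alpha\in(0,1)$ is asymptotically dependent, so $\nu_C$ is concentrated on the open cone $(0,\infty]^{|C|}$; hence $\mu^{(C)}$ is concentrated on $\mathbb{E}_C$, and since the cones $\{\mathbb{E}_A\}$ are pairwise disjoint, the restriction of $\mu$ to $\mathbb{E}_C$ equals $\mu^{(C)}$. It remains to connect $p_C^{(d)}$ to $\mu^{(C)}$. By \citet{ColesAndTawn1991}, the quantity $p_C^{(d)}$ in \eqref{eqn:spectralDensity} is the mass $H(\mathbb{B}_C)$ that the spectral measure assigns to the face $\mathbb{B}_C$. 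Writing $\mu$ in pseudo-polar coordinates $(R,W)$ with $R=\|x\|_1$, homogeneity of order $-1$ together with \eqref{eqn:spectralMeasure} gives $\mu(\{R>r\}\times B)=a\,r^{-1}H(B)$ for Borel $B\subseteq\mathcal{S}_{d-1}$; matching $\mu(\{x_i>1\})=1$ (unit Fr\'echet margins) against $\int_{\mathcal{S}_{d-1}}w_i\,dH=1/d$ from \eqref{eqn:momentConstraint} forces $a=d$. Therefore, for $i\in C$, using $\{x_i>1\}=\{R>1/W_i\}$,
\begin{align*}
\int_{\mathbb{B}_C}w_i\,dH&=\tfrac1d\,\mu\bigl(\mathbb{E}_C\cap\{x_i>1\}\bigr)=\tfrac1d\,\mu^{(C)}\bigl(\{x_i>1\}\bigr)\\
&=\tfrac1d\,\nu_C\bigl(\{z_i>1/\theta_{i,C}\}\bigr)=\tfrac{\theta_{i,C}}{d},
\end{align*}
the last step because $\nu_C$ itself has unit Fr\'echet margins. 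Summing over $i\in C$ and using $\sum_{i\in C}w_i=1$ on $\mathbb{B}_C$ then yields $p_C^{(d)}=H(\mathbb{B}_C)=\sum_{i\in C}\int_{\mathbb{B}_C}w_i\,dH=\frac1d\sum_{i\in C}\theta_{i,C}$, which is the claim; as a consistency check, $\sum_{C}p_C^{(d)}=\frac1d\sum_i\sum_{C\ni i}\theta_{i,C}=1$.

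I expect the crux to be the step asserting that the restriction of $\mu$ to $\mathbb{E}_C$ is the single contribution $\mu^{(C)}$: this rests on the additivity of exponent measures under componentwise maxima of independent vectors, on the asymptotic dependence of the symmetric logistic for $\alpha<1$ (so that $\nu_C$ charges only the open cone), and on the bookkeeping of terms $C$ for which some $\theta_{i,C}$ vanish, which would otherwise place $\mu^{(C)}$ on a strict sub-cone --- this is why one restricts to the convention $\theta_{i,C}>0\iff i\in C$. A fully self-contained alternative would instead substitute the explicit spectral density into \eqref{eqn:spectralDensity}: writing $m=|C|$, only the $A=C$ summand of $V$ survives $\lim_{x_j\to0,\,j\notin C}\prod_{i\in C}\partial_{x_i}V$, because summands with $A\supsetneq C$ carry a factor $\bigl(\sum_{k\in A}(\theta_{k,A}/x_k)^{1/\alpha}\bigr)^{\alpha-m}\to0$ while summands with $A\not\supseteq C$ are annihilated by a derivative, leaving $V^{\{C\}}(w_C)=-\alpha^{1-m}\prod_{j=1}^{m-1}(j-\alpha)\bigl(\prod_{i\in C}\theta_{i,C}^{1/\alpha}w_i^{-1-1/\alpha}\bigr)\bigl(\sum_{i\in C}\theta_{i,C}^{1/\alpha}w_i^{-1/\alpha}\bigr)^{\alpha-m}$; one then has to evaluate an $(m-1)$-dimensional simplex integral, which I would rather avoid since the measure-theoretic argument above gives the answer with essentially no computation.
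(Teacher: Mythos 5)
Your proof is correct, but it takes a genuinely different route from the paper's. The paper argues analytically: it isolates the single summand $V_C$ of the exponent function that survives $\lim_{x_j\to0,\,j\notin C}\prod_{i\in C}\partial_{x_i}V$, writes down $V^{\{C\}}$ explicitly, and then---rather than evaluating the resulting $(|C|-1)$-dimensional simplex integral---observes that $d\,p_C^{(d)}$ does not depend on $d$, so it may set $d=|C|$ and run an induction on $|C|$, using the total-mass identity $\sum_{\underline{C}} d\,p_{\underline{C}}^{(d)}=d$ together with $\sum_{\underline{C}\ni i}\theta_{i,\underline{C}}=1$ to solve for the mass on the top face. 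You instead exploit the max-mixture representation $X_i=\max_{C\ni i}\theta_{i,C}Z_{i,C}$, the additivity of exponent measures under independent componentwise maxima, and the fact that the symmetric logistic with $\alpha\in(0,1)$ charges only the open cone, to identify $\mu|_{\mathbb{E}_C}$ with the pushforward of $\nu_C$; the result then drops out from marginal standardization ($\nu_C(\{z_i>1/\theta_{i,C}\})=\theta_{i,C}$), the normalization $a=d$, and $\sum_{i\in C}w_i=1$ on $\mathbb{B}_C$. Your argument is cleaner---it needs neither the derivative computation nor the induction, and it makes transparent \emph{why} the answer is $\sum_{i\in C}\theta_{i,C}/d$---at the cost of leaning on the probabilistic construction of the model (which the paper itself uses in Section~4.2) and on the identification $p_C^{(d)}=H(\mathbb{B}_C)$ from the Coles--Tawn spectral density formula, whereas the paper's argument stays entirely within the analytic definition~\eqref{eqn:spectralDensity}. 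One point in your favour: you explicitly flag the convention $\theta_{i,C}>0$ for all $i\in C$. This is genuinely needed---if $\theta_{i,C}=0$ for some but not all $i\in C$, then $V^{\{C\}}\equiv0$ while $\sum_{i\in C}\theta_{i,C}>0$, and the mass migrates to a strict sub-cone---and the paper's proof relies on the same convention implicitly (its base cases $dp_i^{(d)}=\theta_{i,i}$ already fail without it).
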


\begin{proof}
Consider the exponent measure of the asymmetric logistic model $V(\bmrem{x})$ as a sum of functions $V_{C}(\bmrem{x}_{C})$, for $C\in 2^{D}\setminus\emptyset$, i.e.,\
\[
V(\bmrem{x}) = \sum_{C\in 2^{D}\setminus\emptyset} V_{C}(\bmrem{x}_{C}), ~~~~V_{C}(\bmrem{x}_{C}) = \Bigg\{ \sum_{i\in C} \bigg(\frac{\theta_{i,C}}{x_{i}}\bigg)^{1/\alpha} \Bigg\}^{\alpha}.
\]
Then for any dimension $d\geq |C|$,
\begin{align*}
V^{\{C\}}(\bmrem{x}_{C}) = \left(\prod_{i\in C}\frac{\partial}{\partial x_{i}}\right) V_{C}(\bmrem{x}_C)= \left\{ \prod_{i=0}^{|C|-1}-\bigg(\frac{\alpha-i}{\alpha}\bigg) \right\} \left( \prod_{i\in C}\frac{\theta_{i,C}^{1/\alpha}}{x_{C,i} ^{1+1/\alpha}}\right)\left\{ \sum_{i\in C} \bigg(\frac{\theta_{i,C}}{x_{C,i}}\bigg)^{1/\alpha} \right\}^{\alpha-|C|},
\end{align*}
since for $\overline{C}\supset C$, $\lim\limits_{x_i\rightarrow 0:i\in \overline{C}\setminus C} \bigg(\prod\limits_{j\in C}\frac{\partial}{\partial x_{j}}\bigg)V_{\overline{C}}(\bmrem{x}_{\overline{C}})=0$. Hence, by result \eqref{eqn:spectralDensity},
\begin{align}
dp_{C}^{(d)} = -\int_{\mathbb{B}_{C}}V^{\{C\}}(\bmrem{w}_{C})\prod\limits_{i\in C}dw_{i},
\label{eqn:alogRes}
\end{align}
which we note does not depend on $d$. We claim that 
\begin{align}
-\int_{\mathbb{B}_{C}}V^{\{C\}}(\bmrem{w}_{C})\prod\limits_{i\in C}dw_{i} = \sum\limits_{i\in C}\theta_{i,C}.
\label{eqn:mainAlogRes}
\end{align}
First consider $|C|=1$, i.e.\ $\mathbb{B}_{C}=\{\bmrem{w}:w_{i}=1\}$ for $C=\{i\}$. Here,
\[
V^{\{i\}}(x_{i}) = \frac{\partial}{\partial x_{i}}V_{i}(x_{i}) = -\frac{\theta_{i,i}}{x_{i}^2},
\]
so 
\[
dp_{i}^{(d)} = \frac{\theta_{i,i}}{w_{i}^2}\Big|_{w_{i}=1} = \theta_{i,i},~~~~~~~ i=1,\dots,d.
\]
Now consider $|C|=2$. We have
\[
V^{\{i,j\}}(x_i,x_j) = \bigg(\frac{\alpha-1}{\alpha}\bigg) \frac{\big\{(1-\theta_{i,i})(1-\theta_{j,j})\big\}^{1/\alpha}}{(x_{i}x_{j})^{1+1/\alpha}}\bigg\{\bigg(\frac{\theta_{i,ij}}{x_{i}}\bigg)^{1/\alpha} + \bigg(\frac{\theta_{j,ij}}{x_{j}}\bigg)^{1/\alpha}\bigg\}^{\alpha-2},
\]
so 
\[
h_{i,j}(w_i)=\bigg(\frac{1-\alpha}{\alpha}\bigg) \frac{\big(\theta_{i,ij}\theta_{j,ij}\big)^{1/\alpha}}{\big\{w_{i}(1-w_{i})\big\}^{1+1/\alpha}}\bigg\{\bigg(\frac{\theta_{i,ij}}{w_{i}}\bigg)^{1/\alpha} + \bigg(\frac{\theta_{j,ij}}{1-w_{i}}\bigg)^{1/\alpha}\bigg\}^{\alpha-2},
\]
and
\begin{align}
dp_{i,j}^{(d)} = \int_{0}^{1}h_{i,j}(w_{i})dw_{i}.
\label{eqn:bivariateMass}
\end{align}
However, taking $d=2$, we know $2p_{1,2}^{(2)} + 2p_{1}^{(2)} + 2p_{2}^{(2)}=2$, so $dp_{1,2}^{(d)}=\theta_{1,12}+\theta_{2,12}$, and similarly, by \eqref{eqn:bivariateMass}, $\int_{0}^{1}h_{i,j}(w_{i})dw_{i}=dp_{i,j}^{(d)} =\theta_{i,ij}+\theta_{j,ij}$. So, \eqref{eqn:alogRes} holds for $|C|=1$ and $|C|=2$, and we suppose it holds for $|C|\leq k$, i.e.\ $dp_{C}^{(d)}=-\int_{\mathbb{B}_{C}}V^{\{C\}}(\bmrem{w}_{C})\prod\limits_{i\in C}dw_{i}=\sum\limits_{i\in C}\theta_{i,C}$.

Since \eqref{eqn:alogRes} does not depend on $d$, take $d=k+1$. So for all $\underline{C}$ with $|\underline{C}|\leq k$, $(k+1)p_{\underline{C}}^{(k+1)}=\sum\limits_{i\in\underline{C}}\theta_{i,\underline{C}}$. Now take $C=\{1,\dots,k+1\}$. Then,
\begin{align*}
(k+1)p_{C}^{(k+1)} &= (k+1) - \sum\limits_{\underline{C}\subset C}\sum\limits_{i\in \underline{C}}\theta_{i,\underline{C}} = (k+1) - \sum\limits_{i\in C}\sum\limits_{\underline{C}\subset C}\theta_{i,\underline{C}}\\
&=\sum\limits_{i\in C}\left(1-\sum\limits_{\underline{C}\subset C}\theta_{i,\underline{C}}\right)\\
&=\sum\limits_{i\in C}\theta_{i,C}.
\end{align*}
As such, \eqref{eqn:mainAlogRes} holds by induction.
\end{proof}

\bibliography{refs}
\end{document}